\newtheorem{lem}{Lemma}
\newtheorem{thm}{Thorem}
\begin{document}

% Do not put math or special symbols in the title.
\title{The Geometric Unscented Kalman Filter}
%
%
% author names and IEEE memberships
% note positions of commas and nonbreaking spaces ( ~ ) LaTeX will not break
% a structure at a ~ so this keeps an author's name from being broken across
% two lines.
% use \thanks{} to gain access to the first footnote area
% a separate \thanks must be used for each paragraph as LaTeX2e's \thanks
% was not built to handle multiple paragraphs
%

\author{Chengling Fang, Jiang~Liu, Songqing~Ye  and  Ju Zhang
\thanks{Chengling Fang is with Chongqing Jiaotong University.}
\thanks{Jiang Liu, Songqing~Ye  and  Ju Zhang are with Chongqing Institute of Green and Intelligent Technology, Chinese Academy of Sciences.}}

%\thanks{Manuscript received April 19, 2015; revised September 17, 2015.}

% make the title area
\maketitle

% As a general rule, do not put math, special symbols or citations  in the abstract or keywords.
\begin{abstract} Many filters have been proposed in recent decades for the nonlinear state estimation problem. The linearization-based extended Kalman filter (EKF) is widely applied to nonlinear industrial systems. As EKF is limited in accuracy and reliability, sequential Monte-Carlo methods or particle filters (PF) can obtain superior accuracy at the cost of a huge number of random samples.  The unscented Kalman filter (UKF) can achieve adequate accuracy more efficiently by using deterministic samples, but its weights may be negative, which might cause instability problem. For Gaussian filters, the cubature Kalman filter (CKF) and Gauss Hermit filter (GHF) employ cubature and respectively Gauss-Hermite rules to approximate statistic information of random variables and exhibit impressive performances in practical problems. Inspired by this work, this paper presents a new nonlinear estimation scheme named after geometric unscented Kalman filter (GUF). The GUF chooses the filtering framework of CKF for updating data and develops a geometric unscented sampling (GUS) strategy for approximating random variables. The main feature of GUS is selecting uniformly distributed samples according to the probability and geometric location similar to UKF and CKF, and having positive weights like PF. Through such way, GUF can maintain adequate accuracy as GHF with reasonable efficiency and good stability. The GUF does not suffer from the exponential increase of sample size as for PF or failure to converge resulted from non-positive weights as for high order CKF and UKF.
\end{abstract}

% Note that keywords are not normally used for peerreview papers.
\begin{IEEEkeywords}
Gaussian Filter, Nonlinear Estimation, Cubature Kalmam Filter, Unscented Kalman Filter, Particle  Filter.
\end{IEEEkeywords}

\IEEEpeerreviewmaketitle

\section{Introduction}

\IEEEPARstart {N}{onlinear} filtering has been widely studied in many science and engineering disciplines. Bayesian estimation theory provides a general filtering framework for this \cite{jazwinski2007stochastic}, which utilizes Bayes' rule to estimates the probabilistic state of a system. Computing the posterior probability density function (PDF) is a crucial part of this method. However, multidimensional integrals are typically intractable \cite{OptimalFiltering}, and a closed-form solution to the posterior density is available only for a restricted class of filters. For example, if the dynamic state-space model is linear with additive Gaussian noise and the prior distribution of the state
variable is Gaussian, then the well-known Kalman filter (KF) \cite{kalman1961new} provides a closed-form solution. For general cases, various approximate methods were proposed to estimate PDF. These methods are categorized into two classes: global and local methods \cite{arasaratnam2007discrete, 2009Cubaturekalmanfilters, vsimandl2009derivative}.

The global approach makes no explicit assumption about prior and posterior's PDF and can achieve satisfactory accuracy with a heavy computational load. This type filters compute the posterior PDF directly by using approximating techniques \cite{2009Cubaturekalmanfilters, vsimandl2009derivative}, for example, the point-mass filter  \cite{vsimandl2006advanced}, the Gaussian mixture filter \cite{alspach1972nonlinear}, the particle filter (PF) \cite{1993Novelapproachtononlinear},
and Quasi-Monte Carlo filter \cite{guo2006quasi}. In general, global methods may have more computational demands than local methods \cite{vsimandl2009derivative}. For example, the PF reformulates the PDF of state with a set of weighted random samples, rather than the function of in the state-space model \cite{Bradley1992A}. As the samples increase, the PDF can be approximated ever more accurately. However, the computational complexity increases exponentially with the dimensions of the system states \cite{Liu98sequentialmonte}, \cite{Khan2004}, \cite{2009Atutorialonparticlefiltering}.  Besides, the performances of PF depend highly on the selection of proposal distributions \cite{guo2006quasi}.  To address such problems, many sampling strategies have been proposed, such as importance sampling (IS), stratified sampling and systematic sampling  \cite{UseofdifferenMontCarlosamplingtechniques}, \cite{OnsequentialMonteCarlosamplingmethodsforBayesianfiltering}, \cite{MonteCarlosamplingmethodsusingMarkovchainsandtheirapplications}. The IS \cite{Robert2005MCS} is the most wildly used since it is easy to implement.
For the sake of computational efficiency, many improvements \cite{Khan2004, Fox03KLD-sampling, Kluwer2003, Khan2004Rao-Blackwell}  have been developed.   In practice, the more nonlinear or non-Gaussian the problem is, the more potential PF would demonstrate, especially when computational power is rather cheap and the data dimension is fairly low  \cite{Gustafsson10particlefilter}.

Under explicit assumption about PDF, the local methods are based on specific approximations of PDF or the nonlinear functions in the state-space model \cite{2009Cubaturekalmanfilters} so that the filtering framework of KF can be used for the Bayesian estimation. The {\it extended Kalman filter} (EKF) \cite{OptimalFiltering}, based on function approximation, is probably the earliest and widely used local method for nonlinear industrial systems. The EKF is computationally efficient. However, it faces two well-known limitations. First, the linearization assumes the existence of the Jacobian matrix. However, this is not always true in practice \cite{Tugnait1982, Kuchar-review, UKF2001}. Second, the linear approximation is only reliable if the remainder of the nonlinear parts is negligible errors. Otherwise, the propagation errors could increase rapidly to severe vibration and divergence  \cite{Athans1968, Mehra1971, Austin1981, Lerro93}. Accordingly, there are various improvements upon EKF as seen in  \cite{TheStabilityAdaptiveTwo-stageExtendedKalmanFilter}, \cite {Amodifiedtemporalapproachtometa-optimizing} and \cite{ExtendedKalmanFilterwithFuzzyMethod}. Their robustness and stability have been discussed in  \cite{RobustExtendedKalmanFiltering}, \cite{Boutayeb1997Convergence}. The EKF was extended to the central difference filter (CDF) \cite{2000Gaussianfiltersfornonlinear} and the divided difference filter (DDF) \cite{schei1997finite, norgaard2000new} without the demand of the Jacobian matrix. They are based on interpolation formula using the similar deterministic sampling approach to approximate the integrand.

In the recent decades, there arose many local filters  based on the polynomial interpolation or PDF approximation: the unscented Kalman filter (UKF) \cite{1997AnewextensionKalman}, the cubature Kalman filter (CKF) \cite{2009Cubaturekalmanfilters},  the Gauss-Hermit filters  (GHF) \cite{kushner2000nonlinear,2008GeneralizedGausHermitefiltering,vsimandl2009derivative}. When the PDF is Gaussian, there are a series of Gaussian approximated (GA) filters based on deterministically chosen weighted points. Besides UKF, CKF and GHF, GA filters include sparse-grid quadrature nonlinear filter (SGQNF) \cite{jia2012sparse}, spherical simplex-radial cubature Kalman filter (SSRCKF) \cite{wang2014spherical}, interpolatory cubature Kalman filter (ICKF) \cite{zhang2015interpolatory}, embedded cubature Kalman filter (ECKF) \cite{zhang2015embedded}, Gaussian sum filters  \cite{Wu2006,GSF2008,AGSF2011}, stochastic integration filter (SIF) \cite{dunik2013stochastic}. These filters can be modified to capture high-order moments by some proper selections of weighted samples to approximate Gaussian PDFs  \cite{1997AnewextensionKalman, 2009Cubaturekalmanfilters}. The UKF and CKF often achieve higher accuracy than EKF with similar complexity  \cite{2004Unscentedfiltering}. Compared with PF, they often achieve high accuracy with a less number of samples, especially for high dimensional systems. Unfortunately, with dimensions increasing, the accuracy of UKF become unreliable  \cite{2009Cubaturekalmanfilters, Wu2006}.  To enhance the accuracy, several improvements were proposed,  such as scaled UKF \cite{2002Thescaledunscented}, high-order unscented filter \cite{TheHigherOrderUnscentedFilter, Grothe2013HoUKF, ALinearExtensionLiuJiang,AHeuristicSigmaLiuJiang}, truncated UKF \cite{TruncatedUnscentedKalmanFiltering}.  However,  such improvements inevitably result in negative weights when the dimension is greater than three, which is probably why the corresponding filters are not reliable or even divergent. The CKF can be regarded as a special case of UKF with a special parameter $\kappa=0$, although it is derived from a different philosophy. It directly estimates the integral  $I(\mathbf f)=\int_{\mathbb R^ n} \mathbf{f}(\mathbf{x})\times \mbox{exp}(-\mathbf{x}^{T}\mathbf{x})\mbox{d}\mathbf{x}$ based on the Cubature rule,  where $\mathbf{f}$ is a  nonlinear function \cite{SphericalRadialIntegration-RulesforBayesianComputation}, \cite{ComparisonMethodsComputationMultivariatetProbabilities}. Similarly, the negative weights still appear in the high order CKF \cite{2013High-degreecubatureKalman}, and might cause the unstable phenomena of this filter. For more discussions on the convergence and improvement of CKF, see \cite{Convergenceanalysisofnon-linearfiltering}, \cite{Robustsquare-rootcubatureKalman} and the references therein. The computational complexity of GHF also grows exponentially with the state dimension. So the computational load is usually prohibitive even for moderately high dimensional dynamical systems. This led to some improvement study \cite{jia2012sparse} of GHF for efficiency. The SIF bases on the stochastic integral rule (SIR) and can eliminate systematic errors caused by nonlinear approximation. Due to the negative weight in 3rd-SIR, the numerical filtering stability cannot be ensured, and the filtering accuracy will degrade greatly \cite{zhang2014quasi}.

Roughly, those mentioned above existing typical nonlinear filtering methods including PF, UKF, CKF, GHF, and 3rd-SIF cannot simultaneously address
numerical instability problem, accuracy and efficiency problems. To simultaneously address these problems, a geometric unscented rule (GUR) is proposed in this article inspired by sampling strategies of PF, UKF and CKF. The major feature of GUR is selecting samples geometrically uniformly distributed on a series of spheres with positive weights.  Then a novel geometric unscented filter (GUF) is obtained by applying the GUR to compute the multidimensional integrals involved in filters. The GUF address the instability problem by positive weights and ensure the accuracy and efficiency by deterministic samples capturing the moments of random variables.
To illustrate the superiority of the proposed GUF algorithm, we present some numerical simulations about target tracking with moderate dimension and high nonlinearity.  As can be seen from simulation results, the new GUF has higher accuracy and better stability than existing filtering algorithms. The efficiency of GUF is confirmed by the comparison result of time-consuming with other methods on the same platform in simulation.

The remainder of this article is organized as follows. In the next section, we briefly review the sampling strategies in PF, UKF and CKF. Based on them, section III presents the novel
nonlinear estimation GUF. Under the framework of GUF, we
study the Gaussian GUF in section IV. Then the numerical simulation
and analysis are given in section V. The last section VI is
composed of some concluding remarks.

\section{Sampling Strategies Review}\label{sec:SSR}
Sampling strategies play a crucial role in the nonlinear filters PF, UKF, GHF and CKF. This section gives a concise review on them. As the sampling takes place in the filters,  we first recall the nonlinear Kalman filtering frame.  This article considers the following model of nonlinear dynamic system:
\begin{eqnarray}
\label{Eq:pmf}
\mathbf{x}_{k+1}&=&\mathbf{f}(\mathbf{x}_{k})+\mathbf{v}_{k}\\
\mathbf{z}_{k+1}&=&\mathbf{h}(\mathbf{x}_{k+1})+\mathbf{w}_{k+1}
\end{eqnarray}
where $\mathbf{x}_{k}\in\mathbb{R}^{n}$; $\mathbf{z}_{k}\in\mathbb{R}^{m}$; $\mathbf{v}_{k}$ and $\mathbf{w}_{k+1}$ are independent Gaussian white process noise and measurement noise with the covariance $\mathbf{Q}_{k}$ and $\mathbf{R}_{k+1}$,  respectively.

Let  $\bar \mathbf{x}_{k|k}$ and $\mathbf{P}_{k|k}$ respectively denote the estimates of mean and covariance of the system state $\mathbf{x}_{k}$ at time $k$. Let  $\mathcal{S}_{i,k|k}$ and  $\omega_{i,k|k}$ be the samples and associated weights, respectively, computed by some sampling strategy for random variable $\mathbf{x}_{k}$, with $1\leq i\leq N$. The sampling $(\mathcal{S}_{i,k|k},\omega_{i,k|k})$ approximate $\mathbf{x}_{k}$ in terms of  the mean  $\bar \mathbf{x}_{k|k}$  and covariance $\mathbf{P}_{k|k}$ in the following sense

\begin{eqnarray}\label{DM}
\bar \mathbf{x}_{k|k}&=&\sum_{i=1}^{N}{\omega_{i,k|k}\mathcal S_{i,k|k}}\\
\mathbf{P}_{k|k}&=&\sum_{i=1}^{N}{\omega_{i,k|k}(\mathcal S_{i,k|k}-\bar \mathbf{x}_{k|k})(\mathcal S_{i,k|k}-\bar \mathbf{x}_{k|k})^{T}}
\end{eqnarray}

Based on the given sampling, the sample propagation in UKF and CKF could be unified as
\begin{eqnarray}
\label{Eq:Samprop}
\mathcal S_{i,k+1|k}&=&\mathbf f(\mathcal S_{i,k|k})
\end{eqnarray}
And the weight of propagated sample $\mathcal S_{i,k+1|k}$ is as the same as the one of $\mathcal S_{i,k|k}$.

Moreover, the filter process of UKF and CKF could be summarized \cite{2000Gaussianfiltersfornonlinear, 2013High-degreecubatureKalman} as follows.

Time update:
\begin{eqnarray}
\label{Eq:TimeUpdate}
\bar \mathbf{x}_{k+1|k}&=&\sum_{i=1}^{N}{\omega_{i,k|k}\mathcal S_{i,k+1|k}}\\
\mathbf{P}_{k+1|k}&=&\sum_{i=1}^{N}{\omega_{i,k|k}(\mathcal S_{i,k+1|k}-\bar \mathbf{x}_{k+1|k})}\nonumber\\
&&~~~~~\times(\mathcal S_{i,k+1|k}-\bar \mathbf{x}_{k+1|k})^{T}+\mathbf{Q}_{k}
\end{eqnarray}

For further measurement updating, it will need sample random variable $\mathbf{x}_{k+1|k}$ based on its mean $\bar \mathbf{x}_{k+1|k}$ and covariance $\mathbf{P}_{k+1|k}$.  Let  $\mathcal S_{i,k+1|k}^{*}$ and $\omega_{i,k+1|k}$ stand for the samples and corresponding weights respectively.

Measurement update:
\begin{eqnarray}
\bar \mathbf{x}_{k+1|k+1}&=&\bar \mathbf{x}_{k+1|k}+\mathbf{K}_{k}(\mathbf{y}_{k+1}-\bar \mathbf{z}_{k+1|k}) \label{formula:datafusion}\\
\mathbf{P}_{k+1|k+1}&=&\mathbf{P}_{k+1|k}-\mathbf{K}_{k}\mathbf{P}_{zz,k+1|k}\mathbf{K}_{k}^{T}
\end{eqnarray}
where
\begin{eqnarray}
\bar \mathbf z_{k+1|k}&=&\sum_{i=1}^{N}{\omega_{i,k+1|k}\mathcal Z_{i,k+1|k}}\\
\mathcal Z_{i,k+1|k}&=&\mathbf{h}(\mathcal S_{i,k+1|k}^*)\\
\mathbf{P}_{zz,k+1|k}&=&\sum_{i=1}^{N}{\omega_{i,k+1|k}(\mathcal Z_{i,k+1|k}-\bar \mathbf{z}_{k+1|k})}\nonumber\\
&&~~\times(\mathcal Z_{i,k+1|k}-\bar \mathbf{z}_{k+1|k})^{T}+\mathbf{R}_{k+1}\\
\mathbf{P}_{xz,k+1|k}&=&\sum_{i=1}^{N}{\omega_{i,k+1|k}(\mathcal S_{i,k+1|k}^*-\bar \mathbf{x}_{k+1|k})}\nonumber\\
&&~~\times(\mathcal Z_{i,k+1|k}-\bar \mathbf{z}_{k+1|k})^{T}\\
\mathbf{K}_{k}&=&\mathbf{P}_{xz,k+1|k}\mathbf{P}_{zz,k+1|k}^{-1}
\label{Eq:MeasureUpdate}
\end{eqnarray}
where $\mathbf y_{k+1}$s are the measure data.

Note that, there is a significant difference between CKF and UKF on the sampling  $\mathcal S_{i,k+1|k}^*$. The original UKF \cite{1997AnewextensionKalman} just directly takes $\mathcal S_{i,k|k}$  as  $\mathcal S_{i,k+1|k}^*$ with associated weight, when there is no distribution assumption on $\mathbf{x}_{k+1|k}$ \cite{2004Unscentedfiltering}. But the CKF \cite{2009Cubaturekalmanfilters} utilizes cubature rule to resample $\mathbf{x}_{k+1|k}$ based on $\bar{\mathbf{x}}_{k+1|k}$ and $\mathbf{P}_{k+1|k}$, under the Gaussian assumption on $\mathbf{x}_{k+1|k}$. As can be seen in simulation, under the Gaussian assumption, if UKF also resamples $\mathbf{x}_{k+1|k}$ like itself sampling $\mathbf{x}_{k}$ then there is a great improvement on the performance of UKF. So the sampling strategies is very important in such filters. In the following, we review these sampling strategies.

\subsection{Unscented Rule Based Sampling}

In UKF, the  unscented sampling (US) selects  samples (so-called sigma points \cite{2004Unscentedfiltering})  to approximate the probability distribution of a random variable by matching its mean and covariance. As illustrated  in Fig. \ref{fig1:UT},  the samples from the contour are determined by the mean and covariance.  The wildly used second order US \cite{2004Unscentedfiltering}  selects  symmetrical sigma points with $N=2n+1$ as follows:
\begin{eqnarray}
\label{UT1}
\begin{array}{llll}
\mathcal S_{0}&=\bar \mathbf{x}&\omega_{0}&=\kappa/(\kappa+n)\\
\mathcal S_{i}&=\bar \mathbf{x}+(\sqrt{(\kappa+n)\mathbf{P}_{x}})_{i}&\omega_{i}&=1/2(\kappa+n)\\
\mathcal S_{i+n}&=\bar \mathbf{x}-(\sqrt{(n+\kappa)\mathbf{P}_{x}})_{i}&\omega_{i+n}&=1/2(\kappa+n)\\
\end{array}
\end{eqnarray}
where $1\leq i\leq n$; $\kappa\in\mathbb{R}$ is a scale parameter to adjust the distance between the sample and mean point; $(\sqrt{(\kappa+n)\mathbf{P}_{x}})_{i}$ is the $i$-th row or column of the matrix square root of $(\kappa+n)\mathbf{P}_{x}$, which can be computed by Cholesky decomposition. In (\ref{UT1}), $\kappa$ is a freedom to be determined.

Such mean and covariance  matching method is naturally extended to  higher moments matching  \cite{2002Thescaledunscented}, \cite{TheHigherOrderUnscentedFilter}, \cite{1997AConsistentDebiasedMethod}, \cite{ConstrainedNonlinearStateEstimationBasedonTheUKFapproach}. They are distinct from the choices of samples and weights. For example, in  \cite{1997AConsistentDebiasedMethod}, $n+\kappa=3$ should hold if it wants to match the fourth order moment of a univariate Gaussian distribution $\mathbf{x}$. Thus, when $n>3$, $\kappa=3-n<0$, which implies weight $\omega_{0}$ being negative. So the covariance may be indefinite to contribute the instability  of filtering process.

\begin{figure}[!t]
  \centering
  \includegraphics[width=3in]{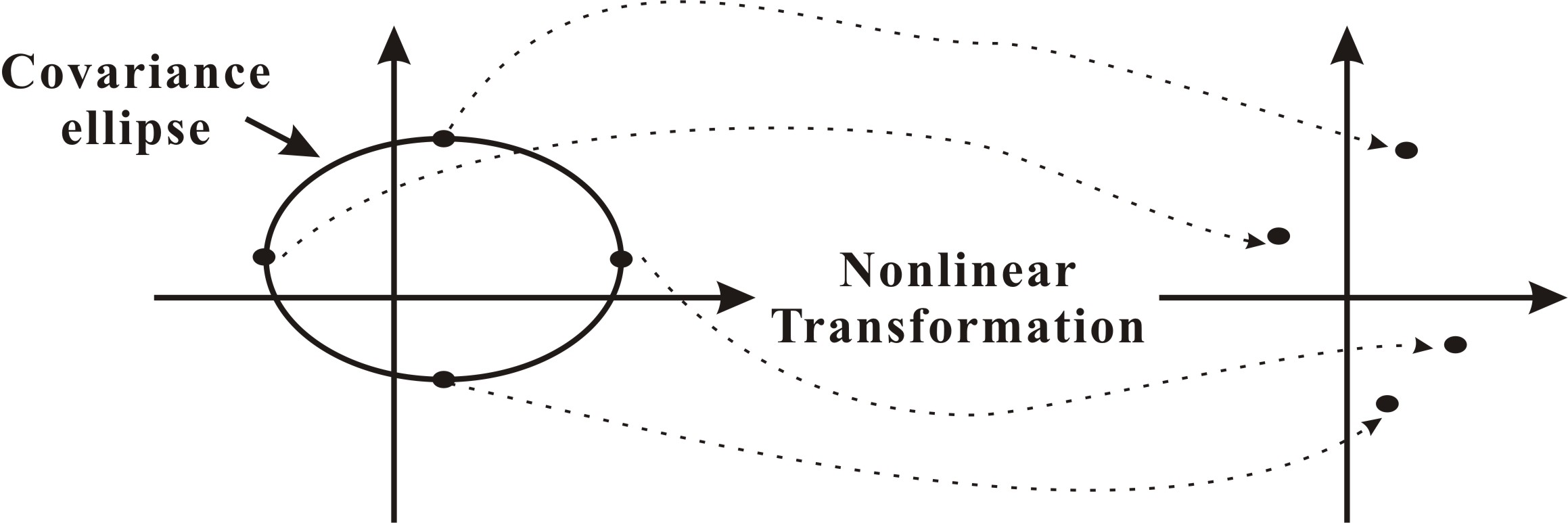}
  \caption{The process of UT}\label{fig1:UT}
\end{figure}

\subsection{Cubature Rule Based Sampling}

In the CKF, to compute posterior distribution, an integral $I(\mathbf f)=\int_{\mathbb R^n} \mathbf{f}(\mathbf{x})\times \mbox{exp}(-\mathbf{x}^{T}\mathbf{x})\mbox{d}\mathbf{x}$ is approximated by samples with associated weights determined by using cubature rule based upon moments matching. Through variable change transformation by setting $\mathbf{x}=r\mathbf{y}$ with $r\geq 0$ and $\mathbf y\in \mathbb R^n$ such that $\mathbf y^T\mathbf y=1$, then  $I(\mathbf f)$ can be rewritten in a spherical-radial coordinate system as
\begin{eqnarray}
\label{SRC}
\begin{array}{lll}
I(\mathbf f) &=& \int_0^{\infty}\int_{U_n}\mathbf f(r\mathbf y)r^{n-1}exp(-r^2)d{\mathbf y} dr
\end{array}
\end{eqnarray}
where $U_n$ is the surface of the sphere specified by $U_n=\{\mathbf y \in\mathbb R^n \mid \mathbf y^T\mathbf y=1\}$.  The {\it spherical-radial cubature rule}  is a combination of  spherical rule, radial rule and cubature rule.  The {\it cubature rule} is about  geometry distribution of samples, which  employs  fully symmetric points and assigns  equal weight to each point.  The {\it spherical rule} is a discretization approach to the integral of form $\int_{U_n}\mathbf f(\mathbf y)d\mathbf y$. And the radial rule is a   discretization approach to the integral of form $\int_{0}^{\infty} f(r)r^{n-1}exp(-r^2)dx$.  Then the samples and the associated weights are computed by solving Gaussian weighted integral equations. For brevity, we call this process cubature sampling (CS).  According to the degrees of $\mathbf f$, the CS is classified into 3-degree \cite{2009Cubaturekalmanfilters} and high-degree  \cite{2013High-degreecubatureKalman}.  The set of samples and weights of 3-degree CS are given by \cite{2009Cubaturekalmanfilters} as follows:
\begin{eqnarray}
\begin{array}{llll}
\mathcal S_{i}&=\bar \mathbf{x}+(\sqrt{n\mathbf{P}_{x}})_{i}&\omega_{i}&=1/2n\\
\mathcal S_{i+n}&=\bar \mathbf{x}-(\sqrt{n\mathbf{P}_{x}})_{i}&\omega_{i+n}&=1/2n\\
\end{array}
\end{eqnarray}
where $1\leq i \leq n$.  For the sampling of high degree CKF, please refer to \cite{2013High-degreecubatureKalman}.

It is clear that 3-degree CS has equal positive weights, which is believed to contribute to the stability comparing with UKF \cite{2009Cubaturekalmanfilters}. Theoretically, high-degree CKF could achieve higher filtering accuracy. However, some weights would be negative, for example, some wights in 5-degree CS are $ \displaystyle\frac{4-n}{2(n+2)^{2}}$ which is negative if $n>4$. The negative weights may lead to unstable calculation process and indefinite result like UKF and halt its operation in CKF,  sabotaging performance presumed as analyzed in \cite{2009Cubaturekalmanfilters}.

\subsection{Importance Sampling}

In the PF, the samples are recursively generated by the so-called  {\it importance sampling} (IS). As a global filter, the PF samples the whole trajectory instead of a single state.  Summarily, the IS \cite{Tokdar2010} refers to a collection of {\it Monte Carlo methods} where a mathematical expectation $\mathbb E_p[\mathbf f (X)]=\int \mathbf f(\mathbf x) p(\mathbf x) d\mathbf x$ with respect to a target distribution  $p(\mathbf x)$  is approximated by a weighted average of random draws from another distribution specified by the weighting function $w(\mathbf x)=\frac{p(\mathbf x)}{q(\mathbf x)}$,  where $p(\mathbf x)$ is the density function of distribution $X$ and $q(\mathbf x)$ is the so-called  {\it importance density} \cite{2009Atutorialonparticlefiltering}. The approximation accuracy by IS highly depends on the choice of $q(\mathbf x)$.  Equivalently, the sampling of $X$ resolves the accuracy of approximation to $\mathbb E_p[\mathbf f (X)]$.

The recipe of IS is to concentrate on the regions where the value is large, and avoid taking samples in regions where the value of the function is negligible \cite{2000Importancesamplingillustrative}. It means that the amount of samples of a region should be proportional to the value the region has. If we think that a sample represents its neighbour region, then big value region means a big valued weight for its represented samples. We may get some intuition from the following simple example. Let $X$ be a random variable and its density distribution be the triangular function $f(x)=0.5(x-a)$ with $x\in [a, a+2]$ for some positive number $a$, which could be plotted  like the left graph  of Fig. \ref{fig:CUMUFUN}.
\begin{figure}[!t]
	\centering
	\includegraphics[width=3in]{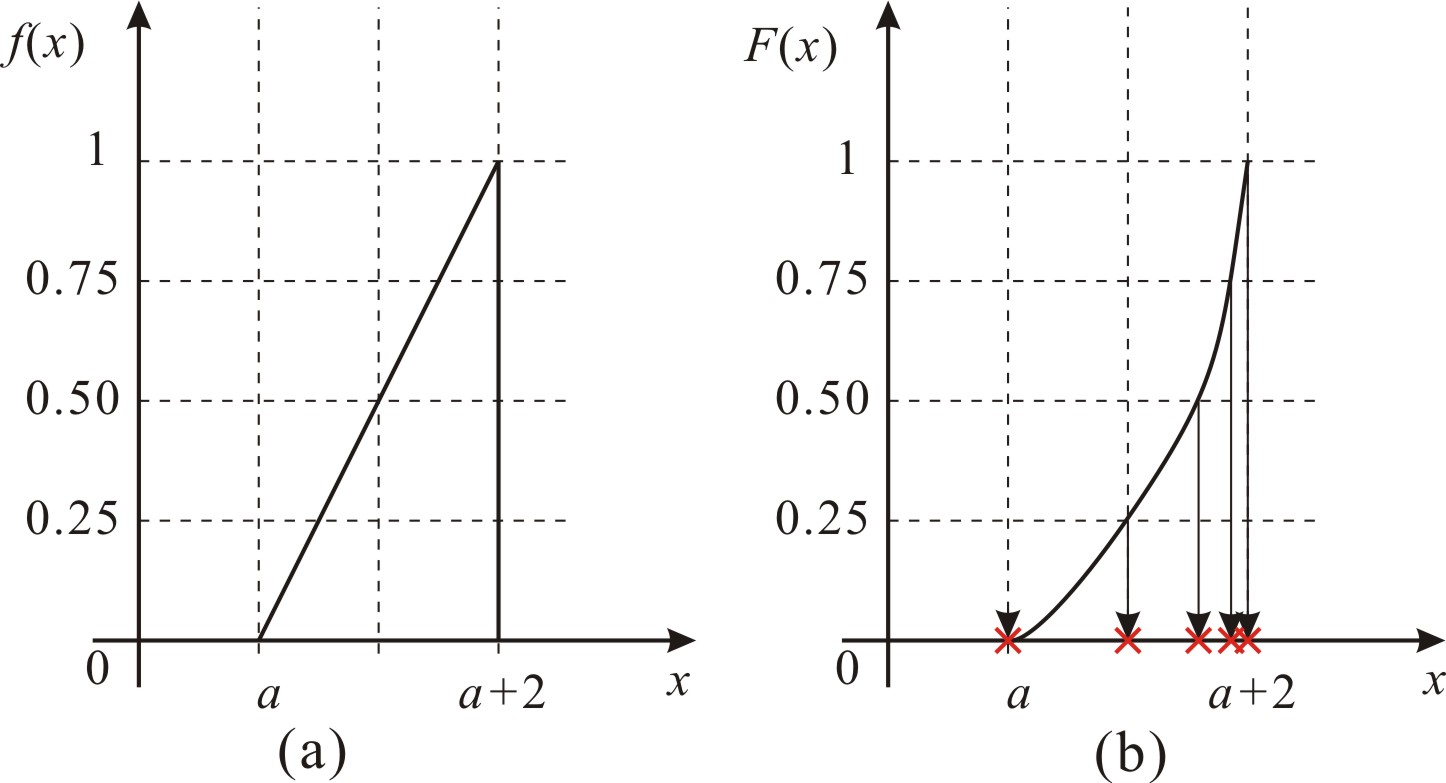}
	\caption{(a) Density distribution. (b) Its cumulative distribution.}\label{fig:CUMUFUN}
\end{figure}
Accordingly, its cumulative distribution is $F(x)=0.25(x-a)^{2}$ on the right of Fig. \ref{fig:CUMUFUN}. Now we would think $F(x)$ as $q(\mathbf x)$ (but not exactly). Then an IS could  carry out as follows:
\begin{enumerate}
	\item  Generate  random numbers $\zeta_{i} \in [0,1], i=1,2,\cdots, N$.
	\item Then set $F(x_{i})=\zeta_{i}$, and solve the $x_{i}= F^{-1}(\zeta_{i})$,
\end{enumerate}
where the $F^{-1}$ stands for the inverse function of $F$. As $F(x)$ is a cumulative distribution function, the intervals with the same  $F$-difference between its end points have the equal importance. So all the samples  from such intervals have the equal importance. That is, a uniform $F$-value distribution would give an importance sampling. For example, in Fig. \ref{fig:CUMUFUN},  the values $\zeta=\{0, 0.25, 0.5, 0.75, 1\}$ give an importance sampling $F^{-1}(\zeta)$'s  associated with weights $\omega_i$ as follows.
\begin{eqnarray}
\label{ISEXAMPLE}
\begin{array}{lllll}
\mathcal X_{1}&=F^{-1}(0)&= a, &  \omega_{1}&=0 \\
\mathcal X_{2}&=F^{-1}(\frac{1}{4})&= a+1, &  \omega_{2}&=\frac{1}{\sqrt{2}+\sqrt{3}+3} \\
\mathcal X_{3}&=F^{-1}(\frac{1}{2})&= a+\sqrt{2}, & \omega_{3}&=\frac{\sqrt{2}}{\sqrt{2}+\sqrt{3}+3} \\
\mathcal X_{4}&=F^{-1}(\frac{3}{4})&= a+\sqrt{3}, &  \omega_{4}&=\frac{\sqrt{3}}{\sqrt{2}+\sqrt{3}+3} \\
\mathcal X_{5}&=F^{-1}(1)&= a+2, & \omega_{5}&=\frac{2}{\sqrt{2}+\sqrt{3}+3}
\end{array}
\end{eqnarray}
These sample points are labeled by red crosses on $x$-coordinate in Fig. \ref{fig:CUMUFUN}. It is clear that the number of sample points in each interval is proportional to the $f$-value at the center of that interval.

The IS-based PF performs quite well in three-dimensional state space. However, it suffers from the curse of dimensionality, which makes the particle representation too sparse to be a meaningful representation of the posterior distribution in higher dimension cases \cite{Gustafsson10particlefilter}. In practice, the performance degrades quickly with the state dimension.

\section{Uniformly Geometric Unscented Filter}\label{GUF}

Each of the US, CS and IS is some kind of approximation for  $\int_{\mathbb R^n} \mathbf f(\mathbf x) p(\mathbf x) d\mathbf x$ with different accuracy, efficiency  and reliability. The US and CS are of good efficiency, but short on accuracy or reliability in case of acute nonlinearity and high dimensions. The IS showed good accuracy, but its high computational complexity hampers the application for higher dimensions.  Summarily, the common challenge of all these methods is how to develop them for high dimensions. This section presents a scalable sampling scheme to tackle the challenge.

The basic idea of our novel sampling strategy is to reduce  higher dimensional sampling to one dimensional case through an {\it importance function} (IF). The IF indicates the value of points. Then we make judicious choice of samples according to IF. Given a random variable $X$, it is widely accepted that the mean $\bar{\mathbf x}$ and covariance $cov(\mathbf x)$ has the highest importance. Indeed, in most cases, if these two and distribution are known then the density function $p(\mathbf x)$ could be completely formulated. In fact,  US and CS  make full use of this feature in their sampling. In this line of thinking, we say a positive value function $i: \mathbb R^n \mapsto \mathbb R^+$ an {\it importance function} for an $n$-dimension random variable with mean $\bar{\mathbf x}$, density distribution $p(\mathbf x)$ and probability distribution $P(X)$, if it satisfies
\begin{enumerate}
\item $i(\bar{\mathbf x})\geq i(\mathbf x) >0$, for all $\mathbf x\in \mathbb R^n$,
\item $i$ is upper semi-continuous,
\item $S_\geq(d):=\{\mathbf x\in \mathbb R^n \mid i(\mathbf x)\geq d\}$ is compact and connected for any $d\in \mathbb R^+$ with $d> i_0$,
\item for $d\in \mathbb R^+$, $p(\mathbf x)=p(\mathbf y)$ for any $\mathbf{x, y}\in S_=(d)$,  where $S_=(d):=\{\mathbf x\in\mathbb R^n \mid i(\mathbf x)=d\}$,
\item $Y$ is a uniform distribution,
\end{enumerate}
where $i_0=\inf\limits_{\mathbf x\in \mathbb R^n} i(\mathbf x)$ and $Y$ is the distribution on $[i_0, i(\bar\mathbf x)]$ derived from probability distribution $P(X\in S_\geq(d))$ for $d\in [i_0, i(\bar\mathbf x)]$.
Term 1) emphasizes the super importance of mean value. The continuity in  2) is a smoothing requirement for the IF. For a finite approximation, it imposes the compactness in 3). The condition 4) depicts an equal density distribution for equal importance points. A uniform distribution of $d$ in 5) restricts the concentration on the IF such that we can unbiasly consider all importance references from IF. Note that the IF is not necessarily injective, i.e., $|S_=(d)|\neq 1$, where $|S|$ denotes the amount of elements in  $S$. This means an equal importance of the different points in $S_=(d)$. On the other side, $S_=(d)$ and $S_\geq(d)$ roughly specify certain region value specified by
\begin{eqnarray} \label{eq:IR}
i_R(D)=d_2-d_1, \quad D=\{\mathbf x\in\mathbb R^n\mid d_1\leq  i(\mathbf x)\leq d_2\}
\end{eqnarray}
Note that, the function $i_R(\cdot)$ is a partial function over its domain $\mathcal P(\mathbb R^n)$, where $\mathcal P(\mathbb R^n)$ is the power set of $\mathbb R^n$.

 So the IF is critical to our sampling. In what follows, we discuss an IF based on the mean value, covariance matrix and probability distribution.

\subsection{An Importance Function W.R.T. Gaussian Distribution}
Given a Gaussian random variable $\mathbf x$ and its probability distribution $P(X)$ specified by a probability density function  $p(\mathbf x)$. Let $\bar{\mathbf x}$ and $cov(X)=\mathbf P_X$ be the mean value and covariance matrix. Then, the density function of $\mathbf x$  is
\begin{eqnarray} \label{Eq:GDF}
p(\mathbf{x})=\lambda \exp\left(-\frac{1}{2}(\mathbf{x}-\bar \mathbf{x})^{T}\mathbf{P}_{X}^{-1}(\mathbf{x}-\bar \mathbf{x})\right),~\mathbf{x}\in \mathbb{R}^{n}
\end{eqnarray}
where $\displaystyle\lambda=\frac{1}{{(2\pi)}^{n/2}}\frac{1}{|\mathbf{P}_{X}|^{1/2}}$.
To construct an importance function for $\mathbf x$ , we set a real value function $L: \mathbb R^n\mapsto \mathbb R$ as
\begin{eqnarray}
\label{LF}
L(\mathbf{x})=(\mathbf x-\bar\mathbf x)^T\mathbf{P}_{X}^{-1}(\mathbf x-\bar\mathbf x)
\end{eqnarray}
where  $\mathbf{P}_{X}^{-1}$ is the inverse of $\mathbf{P}_{X}$. Note that, $L$ is a positive definite function, since $\mathbf P_X$ is a positive definite matrix. We using $L$ define radial region $D_r$ as
\begin{eqnarray} \label{RR}
D_r=\{\mathbf x\in \mathbb R^n \mid  L(\mathbf{x})\geq r\}
\end{eqnarray}
and then take the integral function $R_C: \mathbb R^* \mapsto [0,1]$
\begin{eqnarray}
\label{RC}
R_C(r)=\int_{\mathbf x\in D_r} p(\mathbf x)d\mathbf x
\end{eqnarray}
where $\mathbb R^*$ is the set of nonnegative numbers.  It is obvious that $L(\bar\mathbf x)=0$ and $R_C(0)=1$.

Now we define an importance function $i:\mathbb R^n\mapsto \mathbb R^+$  as
\begin{eqnarray}
\label{IFRC}
i(\mathbf x) & = & R_C(L(\mathbf x)) \\
            & = &\frac{1}{(2\pi)^{n/2}} \int_{\mathbf y:~ \mathbf y^T\mathbf y\geq L(\mathbf x)}\exp{(-\frac{1}{2}\mathbf y^T\mathbf y)d\mathbf y} \label{eq:distr}
\end{eqnarray}
First, $i(\mathbf x)>0$ and $i(\bar\mathbf x)=1\geq i(\mathbf x)$ for all $\mathbf x\in \mathbb R^n$. It is evident that $i(\cdot)$ is upper semi-continuous by the continuity of $R_C(\cdot)$ and $L(\cdot)$. From (\ref{RC}),  $i_0=0$, together with (\ref{IFRC}), it follows for any $d>i_0$
\begin{eqnarray}\label{CCC}
S_\geq(d)=\{\mathbf x\mid i(\mathbf x)\geq d\}=\{\mathbf x\mid L(\mathbf x)\leq r_d\}
\end{eqnarray}
where $r_d$ is the real number such that $R_C(r_d)=d$. So $S_\geq(d)$ is compact and connected for all $d>i_0$. Furthermore, it is not hard to show any $\mathbf x,\mathbf y\in S_=(d)$ implies $p(\mathbf x)=p(\mathbf y)$ herein. Let $L_=(r):=\{\mathbf x\mid L(\mathbf x)=r\}$ for the real numbers $r>L(\bar\mathbf x)$, then it is actually  $S_=(d)=L_=(r_d)$ for any $d>i_0$. At last, for the derived distribution $Y$ from $P(X\in S_\geq(d))$, we have
\begin{eqnarray}\label{URD}
P(Y\geq d)=P(X\in S_\geq(d))=P(D_{r_d})=R_C(r_d)=d
\end{eqnarray}
for $d\in[0, 1]$ and so it has a uniform distribution. Therefore, the function $i(\cdot)$ defined by (\ref{IFRC}) is an IF.

This IF is induced by the probability distribution function, which mainly concentrates on the characters of the probability distribution. In subsequent research, we are going to further investigate the construction of IF that is related to the integrand  $\mathbf f(\cdot)$ in $\int_{\mathbb R^n}{\mathbf f(\mathbf x)}p(\mathbf x)d\mathbf x$. Moreover, it might be useful to study the IF that considers the characters of both integrands $p(\cdot)$ and  $\mathbf f(\cdot)$.

\subsection{Uniformly Geometric Unscented Sampling}

Based on previous importance function $i(\mathbf x)$, our sampling strategy runs as follows.
\begin{enumerate}
	\item[i)] First, it generates uniformly distributed random numbers $d_k\in [i_0, i(\bar\mathbf x)]$ with $k=1,2, \dots, N$ for some integer $N$.
	\item[ii)] Then we pick basic samples $\mathcal X^*_{kj}$ such that $i(\mathcal X^*_{kj})=d_k$, with $1\leq j\leq N_k$, for some integer $N_k$. For fixed $k$, all $\mathcal X^*_{kj}$s share the same importance value $d_k$ and so should have the same weight $\omega^*_{kj}$.
	\item[iii)] At last, we normalize the weights $\omega^*_{kj}$ and match the moments of $\mathbf x$ through adjusting the basic samples to obtain the final samples. Eventually, they jointly make a density approximation to the random variable $X$.
\end{enumerate}

First of all, $S_=(d_k)$s are disjoint for any sequence $d_1<\cdots<d_N$ and divide the spaces $\mathbb R^n$ into at most $2N+1$ many disjoint connected parts.  It is natural to select the points from $S_=(d_k)$ to represent the region $R_k:=S_\geq(d_k)\cap S_\geq(d_{k-1})$ for $1<k\leq N$ and $R_1:=\mathbf R^n - S_\geq(d_1)$ for $k=1$.
There are still three challenges to carry out the sampling scheme.
\begin{enumerate}
	\item[CH1] How many samples should be taken for a given $d_k$?
	\item[CH2] How to choose basic samples $\mathcal X^*_{kj}$ from the set  $S_=(d_k)$?
	\item[CH3] What should be the proper weight distribution  $\omega^*_{kj}$?
\end{enumerate}

For CH1, we  employ the idea of IS to decide the amount of samples associated with $d_k$ in what follows. As the points from $S_=(d_k)$ could represent the region $R_k$,  the importance value of $R_k$ may be chosen as $i_R(R_k):=d_k-d_{k-1}$, the amount of samples from $S_=(d_k)$ should be proportional to $i_R(R_k)$. When the density function $p(\mathbf x)$ is known, we  consider an alternative option that takes the samples $\{\mathcal X_{kj}\}$ with $N_k$ proportional to the density function value $p(\mathcal X_{kj})$, since the density value reflects certain importance of samples with respect to their surrounding regions. In PF, the samples are randomly picked from  $R_k$ of size proportional to $i_R(R_k)$.

For CH2, we extend the idea of cubature rule by sampling symmetrically and evenly distributed in the set $S_=(d_k)$, since all points in $S_=(d_k)$ have the same importance value. This can be achieved because of the symmetry of $S_=(d_k)$ to the mean. Moreover, the samples here require more symmetry on the generators than CKF does, but it does not globally require the equal weights. To this end, we come up with a notion of {\it uniformly geometric distribution} (UGD). First we consider the UGD on the sphere $U_n:=\{\mathbf x\in \mathbb R^n\mid \mathbf x^T \mathbf x=1\}$.  A finite sample set $S\subset U_n$ is called a UGD if it satisfies the following conditions:
\begin{enumerate}
	\item[a)] Each $\mathbf x\in  S$ implies $\tau(\mathbf x)\in S$, where $\tau$ is an operation on coordinates of $\mathbf x$ which implements the permutation and/or sign changes of the coordinates.
	\item[b)] There is a constant $d^*$ such that for all $\mathbf x\in  S$, $\inf_{\mathbf y\in  S} \|\mathbf x-\mathbf y\|=d^*$, where $\|\cdot\|$ is the Euclidean norm.
	
\end{enumerate}
The UGD, especially term b), presents a globally even spatial distribution of samples. In the next section, under the Gaussian density assumption, we will elaborate how to apply the spheres' UGD to general sets like $S_=(d_k)$ which are not necessarily $U_n$ anymore.

For CH3, it is natural to take $\omega^*_{kj}=p(\mathcal X^*_{kj})$ if the density distribution $p(\mathbf x)$ of $\mathbf x$ is known. This paper follows such rule under the density distribution assumption. When the density distribution $p(\mathbf x)$ is unknown, we suggest to consider the importance value  $i(\mathcal X^*_{kj})$ as the basic weight $\omega^*_{kj}$ of sample $\mathcal X^*_{kj}$. Similar to UKF, through moments matching, we compute the
normalized weights and adjust the samples $\mathcal X^*_{kj}$ to approximate the density distribution of random variable $X$. Note that, the final importance samples $\mathcal X_{kj}$  are usually different from  $\mathcal X^*_{kj}$ after the moments matching adjustment.

Recall the sampling process, once the numbers $d_k$ are chosen, everything else is deterministic. Nevertheless, the choices of $d_k$ are not fully random since there is still a uniform distribution requirement. Anyway,  $d_k$ can be generated by using the Monte Carlo method for one-dimensional space case. On the other side, the basic sample set  $S\subset U_n$ can be deterministically chosen with highly spatial uniform distribution. So, this sampling is called {\it geometric unscented sampling} (GUS), which is a   semi-deterministic sampling strategy.

\begin{figure*}[!t]
  \centering
  \includegraphics[height=2in]{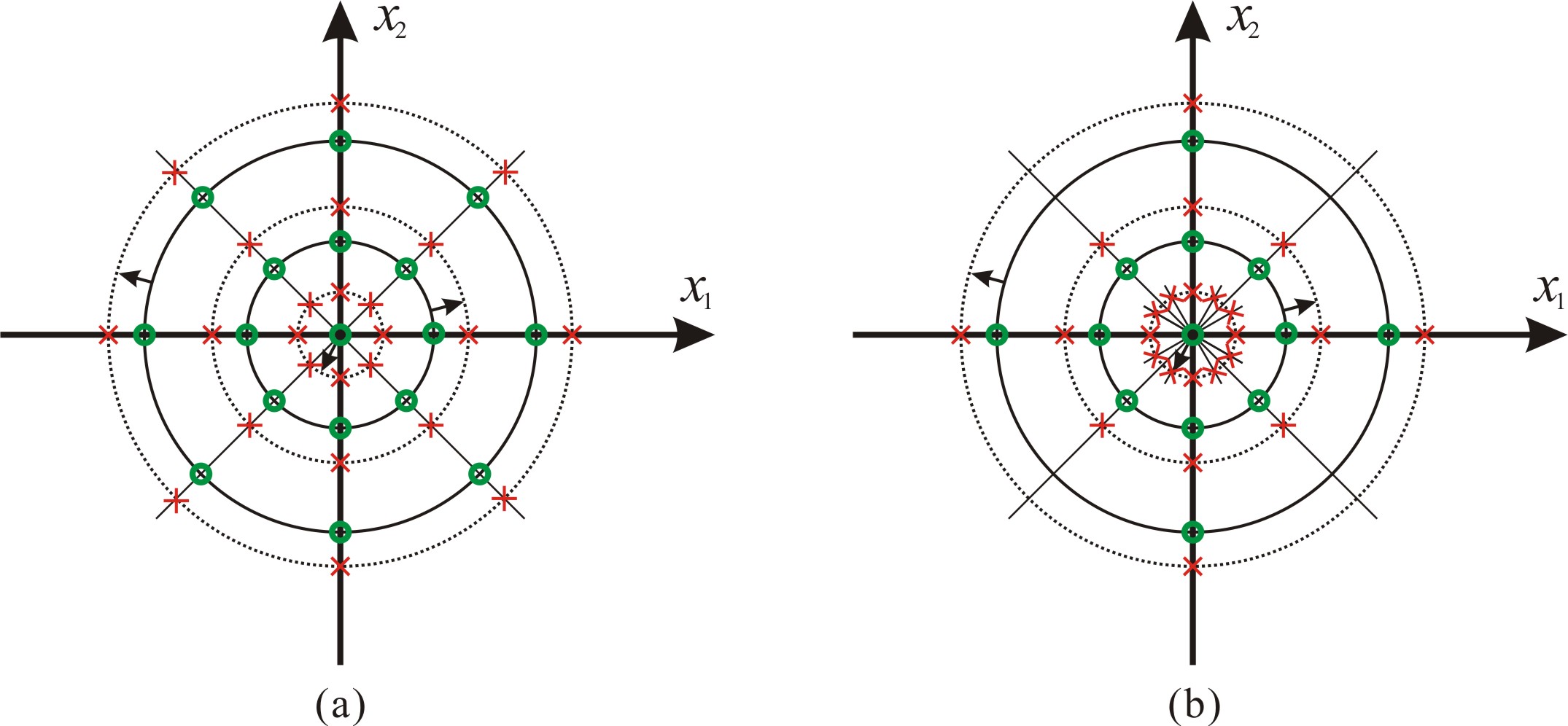}
  \caption{Umbrella Form Importance Sampling of a two-variate normal distribution with $N=3$. (a) Importance samples chosen based upon equation (\ref{NumR}). (b) Importance samples were chosen based upon the equation (\ref{NumDR}).}\label{fig:MIS1}
\end{figure*}

{\bf Example.} Let $X$ be an $n$-variate normal distribution with mean vector $\mathbf 0$ and $n$-dimensional unit covariance matrix $U$. The density function of $X$ is $p(\mathbf x)=\frac{1}{\sqrt{(2\pi)^n}}\exp (-\frac{1}{2}\mathbf x^T\mathbf x)$. Let's illustrate the GUS by taking $n=2$. Without loss of generality, we may take $d_k=\frac{k}{N}$ in $[0,1]$ for $1\leq k\leq N$. And so $i_R(R_k)=\frac{1}{N}$ for all $k$. Using  (\ref{GR}) to compute $r_k$
\begin{eqnarray}
\label{GR}
\int_{\mathbf x\in D_k} \frac{1}{2\pi}\exp (-\frac{1}{2}\mathbf x^T\mathbf x) d\mathbf x=d_k
\end{eqnarray}
where $D_k=\{\mathbf x\in \mathbb R^n \mid  L(\mathbf{x})\geq r_k\}$ and $L(\mathbf x)=\mathbf x^T \mathbf x$. Then we select basic samples $\mathcal X^*_{k,j}$ evenly from the circles   $S_=(r_k)=\{\mathbf x\in \mathbb R^2 \mid \mathbf x^T \mathbf x=r_k\}$. As the density function $\frac{1}{2\pi}\exp (-\frac{1}{2}\mathbf x^T\mathbf x)$ is known, we take $\omega^*_{k,j}=\frac{1}{2\pi}\exp(-\frac{1}{2}r_k)$. The numbers of different importance samples may be chosen such that

\begin{eqnarray}
\label{NumR}
N_1:\cdots: N_N\propto 1:\cdots : 1
\end{eqnarray}
or
\begin{eqnarray}
\label{NumDR}
N_1:\cdots: N_N\propto \exp(-\frac{1}{2}r_1):\cdots : \exp(-\frac{1}{2}r_N)
\end{eqnarray}
where $N_k$ is the number of samples with importance values $\leq d_k$. At last, we normalize $\omega^*_{k,j}$ by
\begin{eqnarray}
\label{NWF}
\omega_{k,j}=\frac{\omega^*_{k,j}}{\sum_{k,j}{\omega^*_{k,j}}}.
\end{eqnarray}
Note that, the final value of weights depends on both of the $\omega^*_{k,j}$ and the amount of samples.

For a simple illustration, we take $N=3$, $d_1=1/3$, $d_2=2/3$, and $d_3=1$. Then $i_R(R_k)=1/3$ for all $k=1, 2, 3$.  Using (\ref{GR}) computes $r_k$, it obtains $r_1=2.1972$, $r_2=0.8109$, $r_3=0$.  We could express such information by a diagram on two dimension space like the  Fig. \ref{fig:MIS1}. Wherein,  the solid circles stand for the sets $S_=(r_1)$, $S_=(r_2)$ and $S_=(r_3)$ where basic samples $\mathcal X^*_{k,j}$ locate with $\omega^*_{1,j}=0.0531$, $\omega^*_{2,j}=0.1061$ and $\omega^*_{3,j}=0.1592$. The numbers of basic samples for different importance value $d_k$ can be chosen based upon (\ref{NumR}) or (\ref{NumDR}).

According to (\ref{NumR}), the numbers $N_k$ of basic samples should obey to $N_1: N_2:N_3\propto 1:1:1$.  On the other side,  the basic samples should be symmetric.  In this example, we take $N_k=8$ which allows symmetric samples $\sqrt{r_k}\mathcal S$ for each importance value $r_k$ where the $\mathcal S$ is consist of  total permutations and/or sign changes of the $[1, 0]^T$ and $[\frac{\sqrt{2}}{2}, \frac{\sqrt{2}}{2}]^T$.  Such basic samples are described by the green circles in Fig. \ref{fig:MIS1}(a). Note that,  as $r_3=0$,  the point $(0,0)$ is a special sample which is a collapse of all  $\sqrt{r_3}\mathcal S$ and could be seen as an eight-fold overlap. In general, we take the importance samples of form $\sqrt{r_k+\beta}\mathcal S$ to give a freedom for moments matching. As for CH3, we assign the importance samples with normalized weights $ \omega_{1,j}= 0.0208$, $\omega_{2,j}=0.0417$ and $\omega_{3,j}=0.0625 $. By moments matching adjustment, we have $\beta= 1.3635$. Eventually, we obtain the importance samples described by the red crosses in Fig. \ref{fig:MIS1}(a).

In case of the proportion  (\ref{NumDR}), for the sake of symmetry we take $N_1=4$, $N_2=8$ and $N_3=12$, which are roughly $4:8:12\propto \exp(-\frac{1}{2}r_1):\exp(-\frac{1}{2}r_2):\exp(-\frac{1}{2}r_3)$.  For each $r_k$, its basic samples sets are $\sqrt{r_1}\mathcal S_1$, $\sqrt{r_2}\mathcal S_2$ and $\sqrt{r_3}\mathcal S_3$, respectively, which are described by the green circles in Fig. \ref{fig:MIS1}(b). Here, the  $\mathcal S_1$ is consist of  total permutations and/or sign changes of the $[1, 0]^T$. So is the $\mathcal S_2$ of the $[1, 0]^T$ and $[\frac{\sqrt{2}}{2}, \frac{\sqrt{2}}{2}]^T$, and so is the $\mathcal S_3$ of the $[1, 0]^T$ and $[\frac{\sqrt{3}}{2}, \frac{1}{2}]^T$.  Similar to the case (\ref{NumR}), the importance samples are of form $\sqrt{r_k+\beta}\mathcal S$ and the normalized the weights become  $ \omega_{1,j}= 0.0179$, $ \omega_{2,j}= 0.0357$, $ \omega_{3,j}=0.0536 $. Through moments matching, it obtains $\beta=1.6114$. Eventually, we obtain the corresponding importance samples described by the red crosses in Fig. \ref{fig:MIS1}(b).

\subsection{Nonlinear Filters Based On GUS}
Now we can utilize GUS to conduct a filter called {\it geometric unscented Filter}  (GUF). Roughly, the GUF  share a similar filtering framework to UKF and CKF, through the formulas (\ref{DM})-(\ref{Eq:MeasureUpdate}). Similar to CKF,  the GUF takes advantage of resampling $\mathbf x_{k+1\mid k}$ under Gaussian assumption; otherwise, it directly uses the transformed samples like UKF. The major difference is that the  GUF employs the sampling strategy GUS to compute the samples  $\mathcal S_{i,k|k}$, $\mathcal S^*_{i,k+|k}$ and the corresponding weights. By the GUS, it is clear that the weights are always positive, which is ensured by a process like the sampling in PF. The basic samples of GUS are selected by an extended method of CS and US. The final samples are computed by a moment matching rule like the US. Thus, GUF is developed out from PF, UKF and CKF.

Note that, due to the limitation of the filtering framework of GUF, the accuracy loss by such filtering framework cannot be avoided even more advanced sampling method is used. For an arbitrary accuracy estimate, we may study the nonlinear filter which adopts the filtering scheme of PF and sampling strategy GUS. However, this is not the goal of this article,  which will be explored in another work.

The GUS conducts a certain simple random resampling in one-dimensional space, but the IS conducts more complex random resampling in higher-dimensional space. In a certain sense, we reduce the complex random resampling of IS to a simpler one using GUS. Comparing the CS of CKF with the US of UKF, the US relies on special samples with the symmetric property.
Note that the CS could be a special case of GUS with $N=1$, some appropriate choice of $d$ and moments matching.  Similarly, the US could also be a special case of GUS with special values of $d$ and  $N=1$.  However, the GUS considers the contributions not only from moments by moments matching but also from probability distribution through the importance function. More importantly, this allows the GUS to employ arbitrarily many samples to approximate a given probability distribution as accuracy as desired at a reasonable cost. As for the weights, the GUS ensures all the weights to be in  $[0,1]$ and to take arbitrarily many different values, unlike the high order US and CS that allow negative weights and take only two different values.

\section{Uniformly Geometric Distribution}\label{sec:UGD}

The previous section gave a framework of GUS and left the discussion about UGD to this section. We present here a detailed GUS of Gaussian random variables called  {\it geometric unscented Gaussian sampling} (GMCGS).

Let $X$ be an $n$-dimensional Gaussian distribution with mean $\bar{\mathbf x}$ and covariance $\mathbf{P}_{X}$, then the density function is (\ref{Eq:GDF}).
We utilize the IF defined by (\ref{IFRC}) since the density function is known here. In this case, the importance values are in the interval $[0,1]$. As a special case of the GUS method, the GMCGS first generates a  uniformly distributed random numbers $d_k\in [0,1]$ ordered by increasing  $k$, where $1\leq k\leq N$ for some integer $N$.

\subsection{Computing The Basic Samples' Parameters}
In what follows, we come to a crucial step of  GMCGS for  generating the basic samples $\mathcal X^*_{kj}$ such that $i(\mathcal X^*_{kj})=d_k$. To this end, according to (\ref{RC}) and (\ref{IFRC}),  we first compute $r_k$ such that $R_C(r_k)=d_k$ and then choose  $\mathcal X^*_{kj}$ from $L_k:=\{\mathbf x\mid L(\mathbf x)=r_k\}$ by employing the general UGD in the next subsection. Here, $r_k$ is computed through the following formula
\begin{eqnarray}
\label{EQ:}
R_C(r_k)&=&\int_{\mathbf x\in D_{r_k}}{\lambda \exp\left(-\frac{1}{2}(\mathbf{x}-\bar \mathbf{x})^{T}\mathbf{P}_{X}^{-1}(\mathbf{x}-\bar \mathbf{x})\right)d\mathbf x} \nonumber \\
&=&\int_{\mathbf{y}^{T}\mathbf{y}\geq r_k}\frac{1}{{(2\pi)}^{n/2}}\exp\left(-\frac{1}{2}\mathbf{y}^{T}\mathbf{y}\right) \mbox{d}\mathbf{y} \label{radial} \\
&=&d_k
\end{eqnarray}
Let $R_S(r_k)$ denote the right of (\ref{radial}). The integrand of $R_S(r_k)$ is actually the density function of standard Gaussian distribution. Let $\mathbf y=r\mathbf s$ with $r\geq 0$ and $\mathbf s\in \mathbb R^n$  such that $\mathbf s^T \mathbf s=1$, then $\mathbf y^T \mathbf y=r^2$ and hence
\begin{eqnarray}
\label{Eq:ACCSphereRadiu}
R_S(r_k)&=&\frac{1}{{(2\pi)}^{n/2}}\int_{r^{2}\geq r_k}r^{n-1}\exp(-\frac{1}{2}r^{2})\mbox{d}r\int_{\mathbf{U}_{n}}d\sigma(\mathcal S)\nonumber\\
&=&\gamma\int_{r^{2}\geq r_k}r^{n-1}\exp(-\frac{1}{2}r^{2})\mbox{d}r
\end{eqnarray}
where $ \displaystyle\gamma=\frac{1}{2^{n/2-1}}\frac{1}{\Gamma(n/2)}$ with the Gamma function $\Gamma(\cdot)$,  $\mathcal S=(s_{1},\cdots,s_{n})^{T}$ and $\sigma(\cdot)$ is the spherical surface measure or the area element on $U_{n}$.
For the fixed dimension $n$, the original problem is transformed into computing $r_k$ such that
\begin{eqnarray}\label{eq:RPC}
\int_{r^{2}\geq r_k}r^{n-1}\exp(-\frac{1}{2}r^{2})\mbox{d}r =\frac{d_k}{\gamma}
\end{eqnarray}
This is a one-dimensional integral problem. It can be quickly solved by some numerical method.

\subsection{Generic UGD Sampling W.R.T. Gaussian Distribution}\label{sec:UGDforGD}
Now we are going to extend the UGD sampling from $U_n$ to general set like $L_k$ for generating basic samples, under the Gaussian assumption.
The basic idea  is to transform the UGD sampling sets $S$ on $U_n$ into samples on $L_k$ by using the mean $\bar{\mathbf x}$ and covariance $\mathbf{P}_{X}$. Such UGD set $S$ is also called {\it reference sampling}. To this end, we compute the Cholesky matrix decomposition $\sqrt{\mathbf{P}_{X}}$ of $\mathbf{P}_{X}$ such that $\mathbf{P}_{X}=\sqrt{\mathbf{P}_{X}}\sqrt{\mathbf{P}_{X}}^T$. Based on a UGD sampling set $S$ of $U_n$ by the method in the last section, we select the samples:
\begin{eqnarray}
\label{UGDX}
\mathcal X^*_{kj}=\bar{\mathbf x}+ \sqrt{r_k} \sqrt{\mathbf{P}_{X}}\mathcal S_j, \mathcal{S}_j\in S
\end{eqnarray}
It is easy to verify that $L(\mathcal X^*_{kj})=r_k$. That is, $\mathcal X^*_{kj}$ are samples on $L_k$. This is a set of symmetric points with respect to the mean $\bar{\mathbf x}$.
However, they are not necessarily closed under permutations.

Anyway, all samples with the same $r_k$ and different $\mathcal S_j$ have the same importance value $d_k$,   shown by the formulas (\ref{LF})-(\ref{IFRC}) and  (\ref{UGDX}). As for the samples in different $L_k$s, they would possess their weights. Moreover, the number of samples in each  $L_k$ would roughly follow some prior proportion rules as before. This can be realized through a series of different UGDs $S_1, S_2, \cdots$ on $U_n$ such that their samples' amounts can form the required proportion.

\subsection{Normalization of Weights and Moments Matching}\label{sec:NWandMM}
As the density function $p(\mathbf x)$ is known, the basic weight $\omega^*_{kj}$ of sample $\mathcal X^*_{kj}$ of form (\ref{UGDX}) is $\lambda \mathbf\exp(-\frac{r_k}{2})$ computed by (\ref{Eq:GDF}). However, the sampling $(\mathcal X^*_{kj}, \omega_{kj})$ itself is not a proper approximation to $X$, since it has a different covariance from $X$. Even worse, the summation of $\omega^*_{kj}$ is  not  unit in general.

A reasonable sampling with weighted samples should capture the statistics of a random variable. For a Gaussian random, the first two moments  present all information. It is natural to consider the covariance matching. To this end, we adjust the samples $\mathcal X^*_{kj}$ by some uniform stretch on them as follows
\begin{eqnarray}
\label{UIS}
\mathcal X_{kj}&=&\bar{\mathbf x}+ \sqrt{r_k+\beta} \sqrt{\mathbf{P}_{X}}\mathcal S_{j},  \quad \mathcal{S}_{j}\in S_k \label{Eq:samples}
\end{eqnarray}
where $S_k\subset U_n$ are UGD samplings set such that
\begin{eqnarray}
\label{NUIS}
|S_1|:|S_2|:\cdots:|S_N|\propto N_1:N_2:\cdots:N_N
\end{eqnarray}
Accordingly, we compute the samples'  weighting values by
\begin{eqnarray}
\label{Eq:DUIS}
p(\mathcal X_{kj})&=&\lambda \exp(-\frac{1}{2}(\mathcal X_{kj}-\bar{\mathbf x})\mathbf P^{-1}_X(\mathcal X_{kj}-\bar{\mathbf x})^T) \nonumber \\
&=&\lambda\exp(-\frac{1}{2}(r_k+\beta))
\end{eqnarray}
And then we normalize these weights by
\begin{eqnarray}
\label{Eq:NW}
\omega_{kj}&=& \frac{p(\mathcal X_{kj})}{\sum_{k\leq N, j\leq N_j}p(\mathcal X_{kj})}  \nonumber \\
&=& \frac{\exp\{-\frac{1}{2}r_k\}}{\sum_{k\leq N}\exp\{-\frac{1}{2}r_k\}\sum_{j\leq N_k}j}
\end{eqnarray}
Here, the formula confirms that for a fixed $k$, all samples $\mathcal X_{kj}$ for different $j$ have the same weight. Let $w_k$ denote the same value of all $\omega_{kj}$ for a fixed $k$.
By matching the mean and covariance, we have the following equations
\begin{eqnarray}
\bar \mathbf{x}&=&\sum_{k\leq N}\sum_{j\leq N_k}\omega_{kj}\mathcal X_{kj} \nonumber \\
&=& \sum_{k\leq N}\sum_{j\leq N_k}\omega_{kj} \bar\mathbf x \\
\mathbf{P}_{X}&=&\sum_{k\leq N}\sum_{j\leq N_k}\omega_{kj}(\mathcal X_{kj}-\bar \mathbf{x})(\mathcal X_{kj}-\bar \mathbf{x})^{T} \nonumber \\
&=& \sum_{k\leq N}w_k(r_k+\beta)\sqrt{\mathbf P_X}\mathbf{M}_k\sqrt{\mathbf P_X}^{T}
\end{eqnarray}
where  $\mathbf{M}_k=\sum_{j\leq N_k}\mathcal S_{j}\mathcal S_{j}^{T}$ for $\mathcal S_{j}\in S_k$. For each $k$,  let  $B_k$ be a standard basis of $S_k$.  Then by the symmetry of $S_k$ on $U_n$ and its closeness under permutations,  we get
\begin{eqnarray}\mathbf{M}_k=\sum\limits_{\mathcal B\in B_k} H_n(\mathcal B) \mathbf{E}_n
\end{eqnarray}
where $H_n(\mathcal B)$ is real number which can be effectively computed as in appendix \ref{appendix:proofs}.
Let $c_k=\sum\limits_{\mathcal B\in B_k} H_n(\mathcal B)$.  These induce two equations
\begin{eqnarray}
1 &=& \sum_{k\leq N, j\leq N_k}\omega_{kj} \label{eq:Wunit}\\
1&=& \sum_{k\leq N}w_k(r_k+\beta)c_k \label{WCov}
\end{eqnarray}
The equation (\ref{eq:Wunit}) is obviously true by (\ref{Eq:NW}). Once we solve (\ref{WCov}), the GUS is accomplished. That is an easy job, since it is a linear function of $\beta$. In fact,
\begin{eqnarray}\label{eq:SC}
\beta &=&\frac{1- \sum_{k\leq N}w_k r_k c_k}{\sum_{k\leq N}w_k c_k}
\end{eqnarray}

 By this value, the final samples $\mathcal X_{kj}$ defined by (\ref{Eq:samples}) match first two the moments of $X$.

\subsection{Theoretical Analysis of GUS}\label{TAGUS}
Let $\delta:=\max\limits_{1\leq i <k}\{d_{i+1}- d_i\}$ for increasing sequence $d_1< d_2 <\cdots< d_k$ in the interval $[0,1]$. If $d_i$s are uniformly distributed in $[0,1]$, then it expects $\delta\rightarrow 0$ as $k\rightarrow \infty$. Every finite UGD sampling $S$ of $U_n$ partitions $U_n$ into finitely many disjoint sets $\sigma_1,\cdots, \sigma_m$ such that their measures' sum is equal to the measure of $U_n$. Let $\mu(\sigma_i)$ denote the measure of $\sigma_i$ for $1\leq i\leq m$, set $\sigma:=\max\limits_{1\leq i\leq m}\mu(\sigma_i)$.
Given a Gaussian distribution with density function $p(\mathbf x)$, and a continuous  function $f:\mathbb R^n \mapsto \mathbb R^n$, then
\begin{eqnarray}\label{eq:DInt}
\lim_{\delta\rightarrow 0, \sigma\rightarrow 0} \sum{\omega_{kj}f(\mathcal X_{kj})} = \int{f(\mathbf x)p(\mathbf x) d\mathbf x}
\end{eqnarray}
if the right hand's integral exists, where $\mathcal X_{kj}$ are sampled by GUS. Intuitively, $\delta\rightarrow 0$ means the probability measure set is well partitioned, which are presented by the importance function and $S_=(d_k)$. Such partition could be seen at the radial direction. The composed sets of radial partition could be further partitioned on the spherical direction, which are symmetrically refined by the samples on the ellipsoids $S_=(d_k)$. Then the formula (\ref{eq:DInt}) immediately follows by the means of Lebesgue-Stieltjes integration. This is the theoretical foundation of GUS. It implies the approximation can be as accurate as possible if there are adequate samples taken in such manner.

Another important feature of GUS is that the computational complexity of GUS can be controlled in some acceptable levels. The main computations are related to (\ref{eq:RPC}) for $r_k$s, the Cholesky matrix decomposition $\sqrt{\mathbf{P}_{X}}$ of $\mathbf{P}_{X}$, the weights by (\ref{Eq:NW}) and the stretch scalar $\beta$ by (\ref{eq:SC}). However, each computing step for $r_k$ can be efficiently carried out by some numerical methods.  Moreover, weight and stretch scalar computing are real arithmetic. These mean that the GUS sampling can be quickly done once it has a careful selection of the number $N$. For the sake of efficiency, it often requires $N$ being some polynomial-size of the dimension $n$. Under this requirement, the GUS could be implemented in polynomial time. In the next section, through a target tracking problem, the GUS achieves high accuracy and reliability with practical efficiency.

\section{Simulation Case Study}\label{Simu}
In this section, we report the simulation results by applying the GUF to a target tracking problem derived from \cite{bar2004estimation}, which was used as a benchmark problem in \cite{2009Cubaturekalmanfilters,2013High-degreecubatureKalman} to validate the performance of filters. This problem consider a typical air-traffic control, wherein an aircraft executes maneuvering turn in a horizontal plane at a constant but unknown rate $\Omega$.  The kinematics of the turning motion can be modeled by:
\begin{eqnarray}
\mathbf{x}_{k}&=&\left[
    \begin{array}{ccccc}
     1&\displaystyle\frac{\sin(\Omega \Delta t)}{\Omega}&0&\displaystyle\frac{\cos(\Omega\Delta t)-1}{\Omega}&0\\
     0&\displaystyle\cos(\Omega\Delta t)&0& \displaystyle-\sin(\Omega\Delta t)&0\\
     0&\displaystyle\frac{1-\cos(\Omega\Delta t)}{\Omega}& 1 & \displaystyle\frac{\sin(\Omega\Delta t)}{\Omega}&0\\
     0&\displaystyle\sin(\Omega\Delta t) & 0 & \displaystyle\cos(\Omega\Delta t) & 0\\
     0&0&0&0&1
    \end{array}\right]\nonumber\\
&~&\times \mathbf{x}_{k-1}+\mathbf{v}_{k-1}
\end{eqnarray}
where $\mathbf{x}_{k}=[x_{k},\dot{x}_{k},y_{k},\dot{y_{k}},\Omega]^{T}$ is the state of the aircraft; $x_{k}$ and $y_{k}$ represent the positions, $\dot{x}_{k}$ and  $\dot{y_{k}}$ are the velocities, in two coordinates, at time $k$, respectively; $\Omega$ is the unknown turn rate; $\Delta t$ is time interval between two consecutive measurements; $\mathbf{v}_{k-1}$ is the Gaussian white process noise with its mean zero and covariance $\mathbf{Q}_{k-1}=\mbox{diag} [q_1M \quad q_1 M \quad q_2]$, where
\begin{eqnarray}
M =\left[
    \begin{array}{cc}
    \displaystyle\frac{\Delta t^{3}}{3}& \displaystyle\frac{\Delta t^2}{2}\\
     \displaystyle\frac{\Delta t^2}{2}&\Delta t
     \end{array}\right]\nonumber
\end{eqnarray}
and the scalar parameters $q_1$ and $q_2$ are related to process noise intensities. The measurements are the range $r$ from the origin of the plane, where a radar is equipped, to the location of aircraft, and the bearing, $\theta$. Correspondingly, the measurement equation is
\begin{eqnarray}
   \left( \begin{array}{c}
          r_{k}\\
       \theta_{k}
        \end{array} \right)&=&\left[\begin{array}{c}
       \displaystyle\sqrt{x^{2}_{k}+y^{2}_{k}}\\
        \tan^{-1}(\displaystyle\frac{y_{k}}{x_{k}})
        \end{array}\right]+\mathbf{w}_{k}
\end{eqnarray}
where $\mathbf{w}_{k}$ is the Gaussian white measurement noise with mean zeros and covariance $\mathbf{R}_{k}=\mbox{diag} [\sigma_r \quad \sigma_\theta]$.

To evaluate various nonlinear filter performances, we employ the root mean square error (RMSE) of the position, velocity and turn rate.  For a general and fair comparison, $50$ independent Monte Carlo runs are taken in each filtering process. The RMSE in position at time $k$ is defined by
\begin{eqnarray} \label{Eq:RMSE_Pos}
\mbox{RMSE-Position}(k)=\sqrt{\frac{1}{N}\sum^{N}_{i=1}[(x^{i}_{k}-\bar x^{i}_{k})^{2}+(y^{i}_{k}-\bar y^{i}_{k})^{2}]}
\end{eqnarray}
where $(x^{i}_{k},y^{i}_{k})$ and $(\bar x^{i}_{k},\bar y^{i}_{k})$ are the true and estimated positions at the $i$-th Monte Carlo run at time $k$. Similarly to the RMSE
in position, we may also define the RMSE
\begin{eqnarray} \label{Eq:RMSE-Vel}
\mbox{RMSE-Velocity}(k)=\sqrt{ \frac{1}{N}\sum^{N}_{i=1}[(\dot{x}^{i}_{k}-\bar{ \dot{x}}^{i}_{k})^{2}+(\dot{y}^{i}_{k}-\bar{ \dot{y}}^{i}_{k})^{2}]}
\end{eqnarray}
in velocity and  the RMSE
\begin{eqnarray} \label{Eq:RMSE-TR}
\mbox{RMSE-Turn Rate}(k)=\sqrt{ \frac{1}{N}\sum^{N}_{i=1} (\Omega^{i}_{k}-\bar \Omega^{i}_{k})^{2} }
\end{eqnarray}
in turn rate.

Note that, all filtering algorithms were coded with MATLAB (2010a version) and ran on a computer platform with Intel(R) Core(TM) i3-2100 CPU @ 3.10 GHz and RAM 2.00 GB.

\begin{figure}[!t]
	\centering
	\subfloat[Position]{\includegraphics[width=0.5\textwidth]{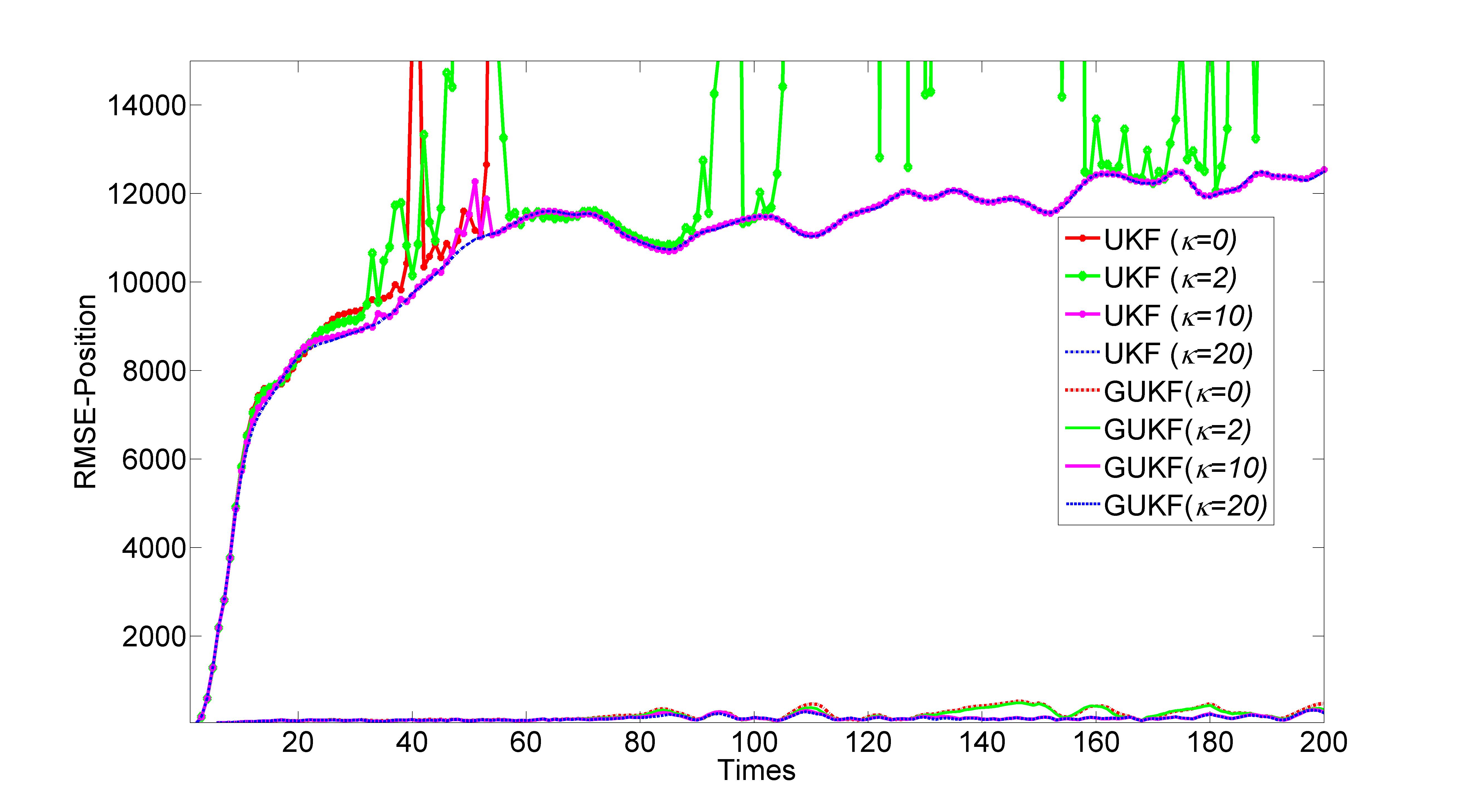}} \hfill
	\subfloat[Velocity]{\includegraphics[width=0.5\textwidth]{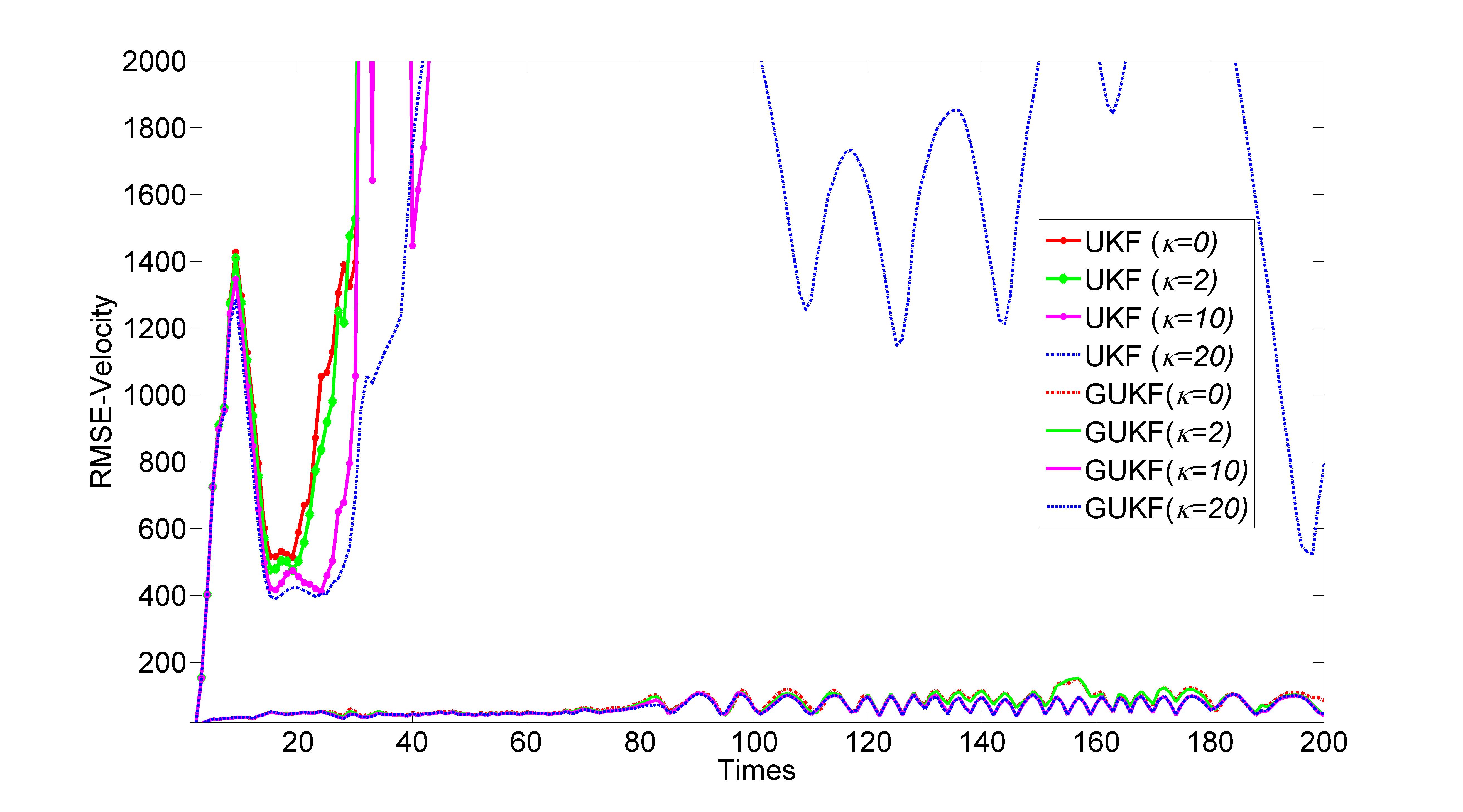}} \hfill
	\subfloat[Turn Rate]{\includegraphics[width=0.5\textwidth]{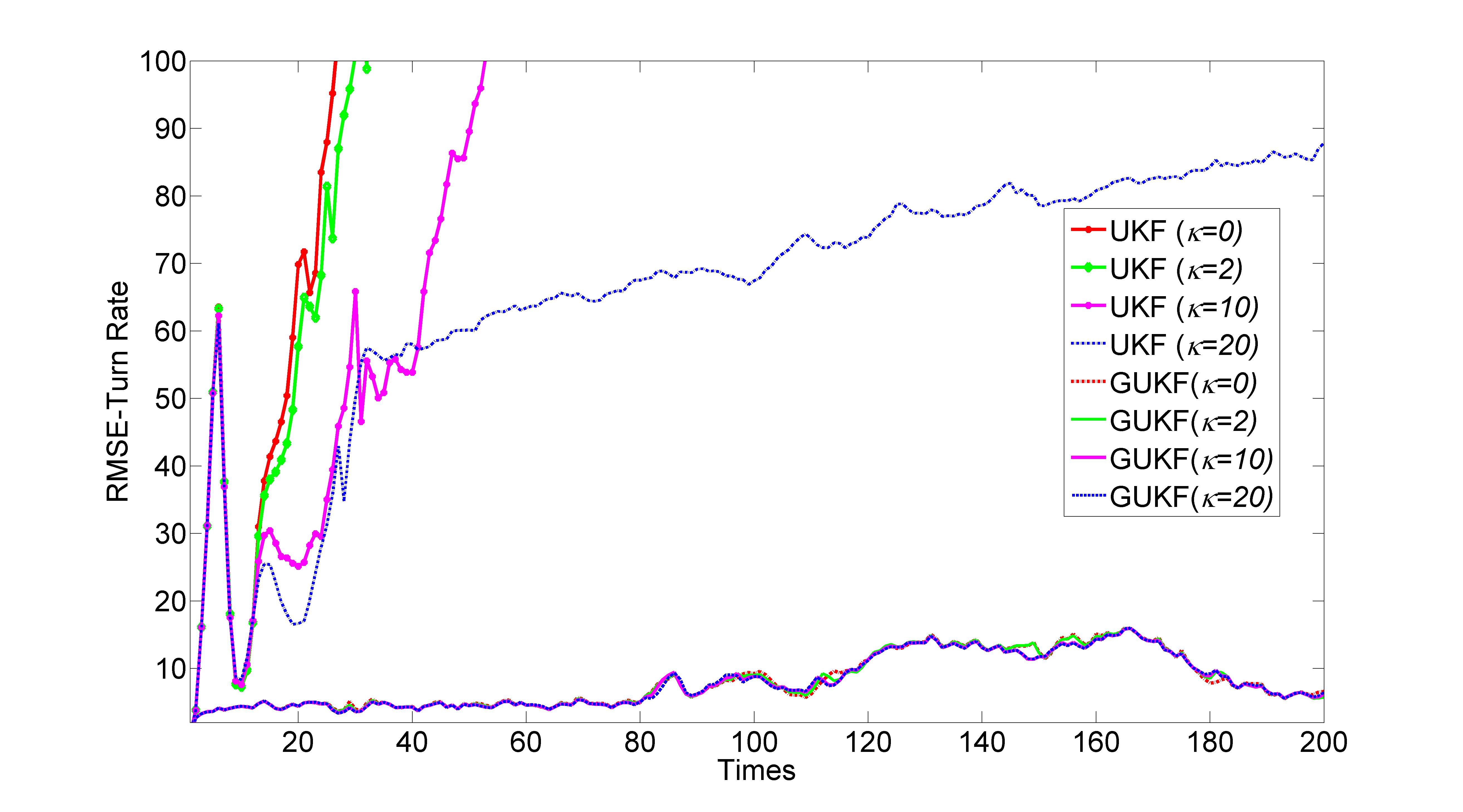}} \hfill
	\caption{The performances of UKF and GUKF}
	\label{fig:UKFsPosition}
\end{figure}

{\it Scenario 1}: For brevity, the UKF adopting the resampling process like CKF under Gaussian assumption is called Gaussian UKF (GUKF).  In section \ref{sec:SSR}, we noted that under the Gaussian assumption, the performance of GUKF should be better than the original UKF. In the following, we use the above target tracking problem to exemplify this view. The data are:
\begin{eqnarray}
&   & \kappa=1,  \nonumber \\
&   & \Omega=-3^{\circ} \mbox{s}^{-1} \nonumber \\
&   & \Delta t=1 \mbox{s}  \nonumber\\
&   & q_1=1 \mbox{m}^2 \mbox{s}^{-3}  \label{sim:data}\\
&   &  q_2=1.75\times 10^{-3} \mbox{s}^{-3} \nonumber \\
&   & \sigma_r= 1000 \mbox{m}^2 \nonumber \\
&   & \sigma_{\theta}= 100 \mbox{m}\mbox{rad}^2 \nonumber
\end{eqnarray}
where $\kappa$ in the parameter in (\ref{UT1}). At the time $k=0$, the estimation of state $\mathbf{x}_{0|0}$ and covariance $\mathbf{P}_{0|0}$ are chosen equally to the initial value
\[ \mathbf{x}_{0}=[1000\mbox{m}, 300\mbox{ms}^{-1}, 1000\mbox{m}, 0\mbox{ms}^{-1},  -3^{\circ}\mbox{s}^{-1}]^{T}\] and
\[ \mathbf{P}_{0}=\mbox{diag}[1000\mbox{m}^{2}, 10\mbox{m}^{2}/\mbox{s}^{2}, 100\mbox{m}^{2},  10\mbox{m}^{2}/\mbox{s}^{2}, 100\mbox{mrad}^{2}/\mbox{s}^{2}]\] respectively. The $\mathbf y_{k+1}$ in (\ref{formula:datafusion}) is generated by a simulating process. All the filters are initialized with the same condition in each run. In each run, the simulation length is 200.

Fig.\ref{fig:UKFsPosition} shows the performances of UKF and GUKF under different parameters $\kappa$. GUKF has better accuracy than UKF.

{\it Scenario 2}: To test the general performance of GUF, we execute it with different choices of the parameter $N$ and the reference sampling. The parameter $N$ roughly determines the distribution of important values $d_k$ and so the weighting values $\sum_{j}\omega_{k,j}$ for $k\leq N$. Together with the number $N_k$ of samples $\mathcal X^*_{k,j}$, each weighting value $\omega_{k,j}$ is fully determined. Under these conditions, the parameter $\beta$ can be computed by (\ref{eq:SC}), and thus the samples are figured out.

In this scenario, all systematic parameters are as same as in Scenario 1,  including (\ref{sim:data}) and the initial values. Once the  $N$ is fixed, we make use of uniformly distributed numbers \begin{eqnarray}
\label{Eq:ISALPHASIM}
d_k =\frac{k}{N+1}, ~~~k=1,2,\cdots,N
\end{eqnarray}
Then we apply $d_k$ to (\ref{eq:RPC}) to get $r_k$ and the corresponding normalized weights.
Here, we adopt (\ref{NumR}) to choose same amounts $N_k$ of samples for each $r_k$.  For brevity, let $\Theta^n$ stand for the sign change operators and $\Phi^n$ denote the set of all permutation operators on the coordinates of an $n$-dimension vector. We carry out four GUF, denoted by GUFi for $1\leq i\leq 4$, over the following parameters:

In GUF1, $N=1$, the reference sampling is
\begin{eqnarray} \label{Eq:RS1}
S_k=\{\phi \circ \theta (\mathcal S) | \mathcal S=(1,0,0,0,0), \phi\in \Phi^5, \theta\in \Theta^5\}
\end{eqnarray}
and hence  $N_k=|S_k|=10$ for $k=1,2,3$. There are totally $10$ samples in each sampling.

In GUF2, $N=2$, the reference sampling is
\begin{eqnarray} \label{Eq:RS2}
S_k=\{\phi \circ \theta (\mathcal S) | \mathcal S=(1,1,0,0,0), \phi\in \Phi^5, \theta\in \Theta^5\}
\end{eqnarray}
and hence  $N_k=|S_k|=50$  for $k=1,2,3$. There are totally $100$ samples in each sampling.

In GUF3, $N=7$, the reference sampling is
\begin{eqnarray} \label{Eq:RS3}
S_k=\{\phi \circ \theta (\mathcal S) | \mathcal S=(1,1,1,0,0), \phi\in \Phi^5, \theta\in \Theta^5\}
\end{eqnarray}
and hence  $N_k=|S_k|=130$  for $k=1,2,3$. There are totally $910$ samples in each sampling.

In GUF4, $N=9$, the reference sampling is
\begin{eqnarray} \label{Eq:RS4}
S_k=\{\phi \circ \theta (\mathcal S) | \mathcal S=(1,1,1,1,0), \phi\in \Phi^5, \theta\in \Theta^5\}
\end{eqnarray}
and hence $N_k=|S_k|=210$  for $k=1,2,3$. There are totally $1890$ samples in each sampling.

\begin{figure}[!t]
\centering
\subfloat[Position]{\includegraphics[width=0.5\textwidth]{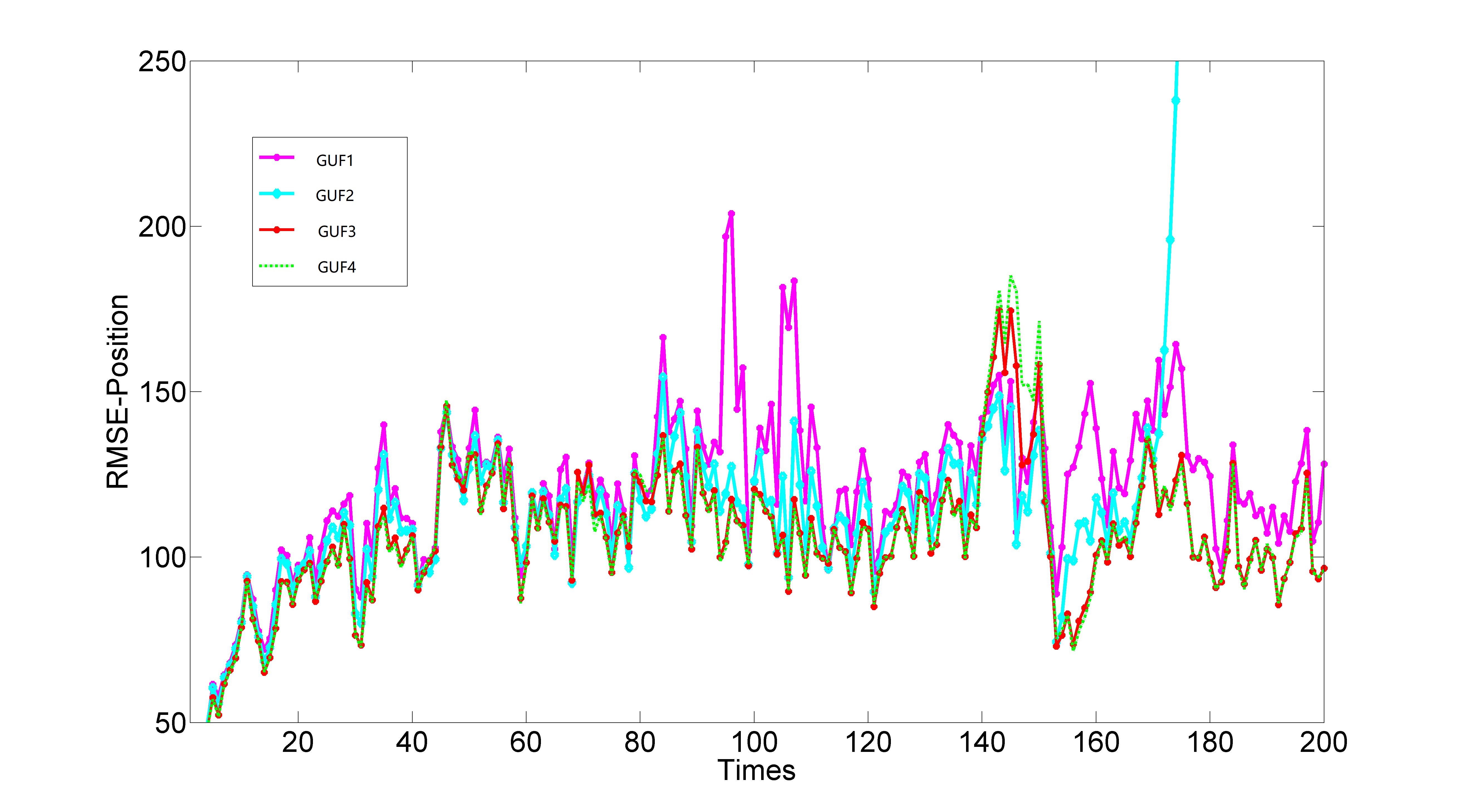}} \hfill
\subfloat[Velocity]{\includegraphics[width=0.5\textwidth]{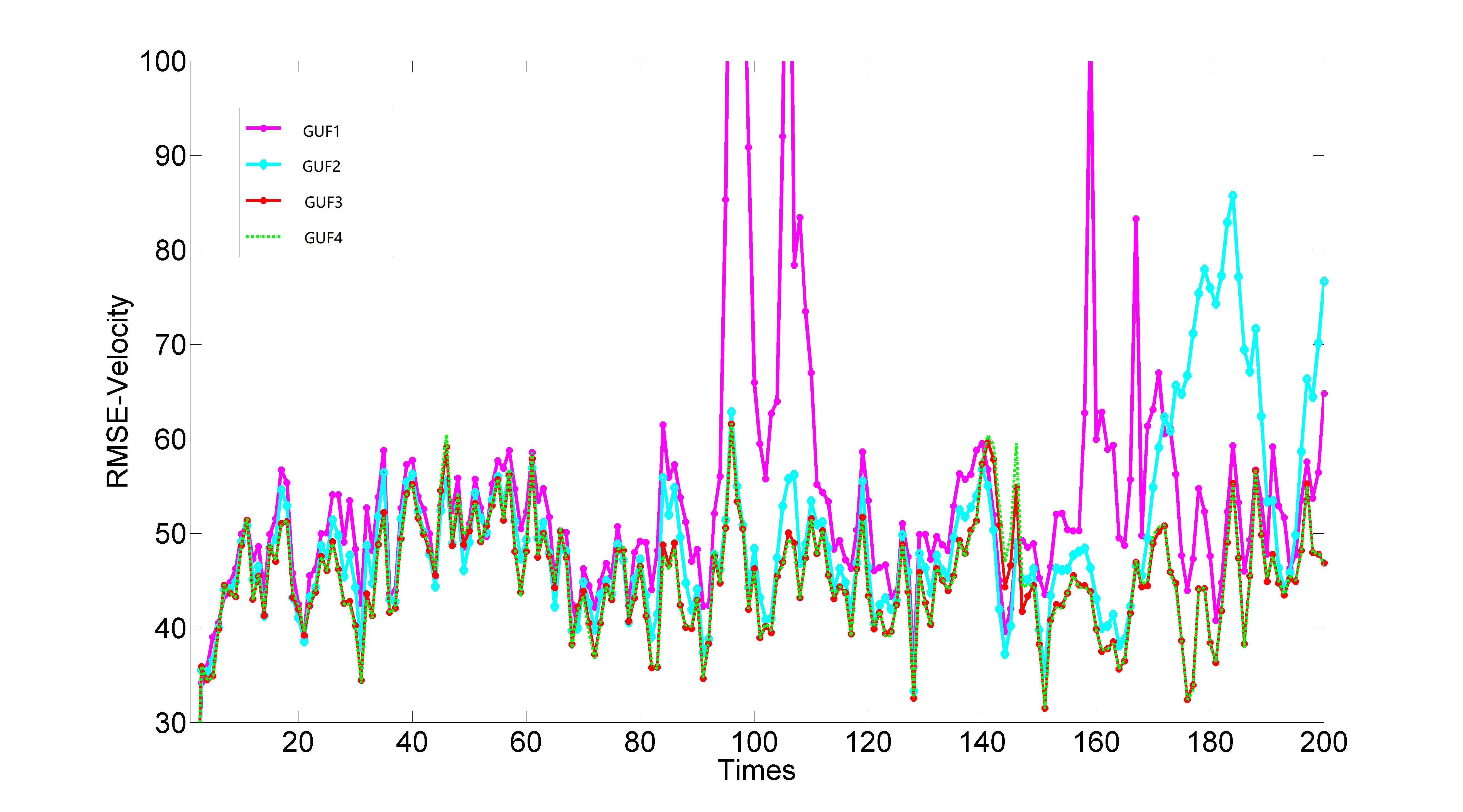}} \hfill
\subfloat[Turn Rate]{\includegraphics[width=0.5\textwidth]{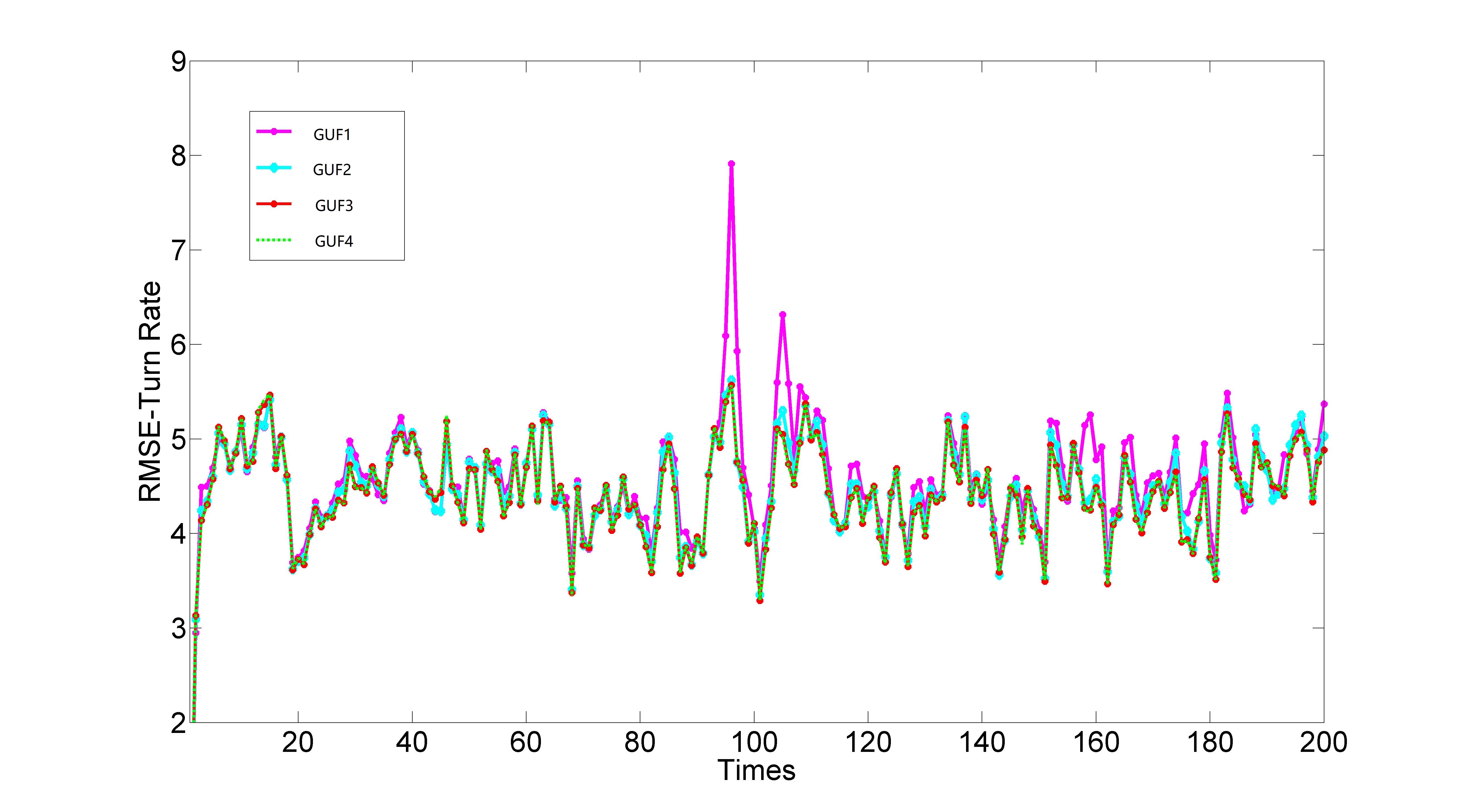}} \hfill
  \caption{The GUFs with different choices of sample}
  \label{fig:GUFs}
\end{figure}

Fig.\ref{fig:GUFs} demonstrates the performances of these four GUF implementations.  Roughly, the accuracy of GUF increases with the number of samples. This is a nature of unscented method. As seen in GUF,  the $d_k$s are sampled by this method. Anyway, the average error is quite stable in terms of RSME. This is due to the special distribution of samples derived from reference samples.

\begin{figure}[!t]
\centering
\subfloat[Position]{\includegraphics[width=0.5\textwidth]{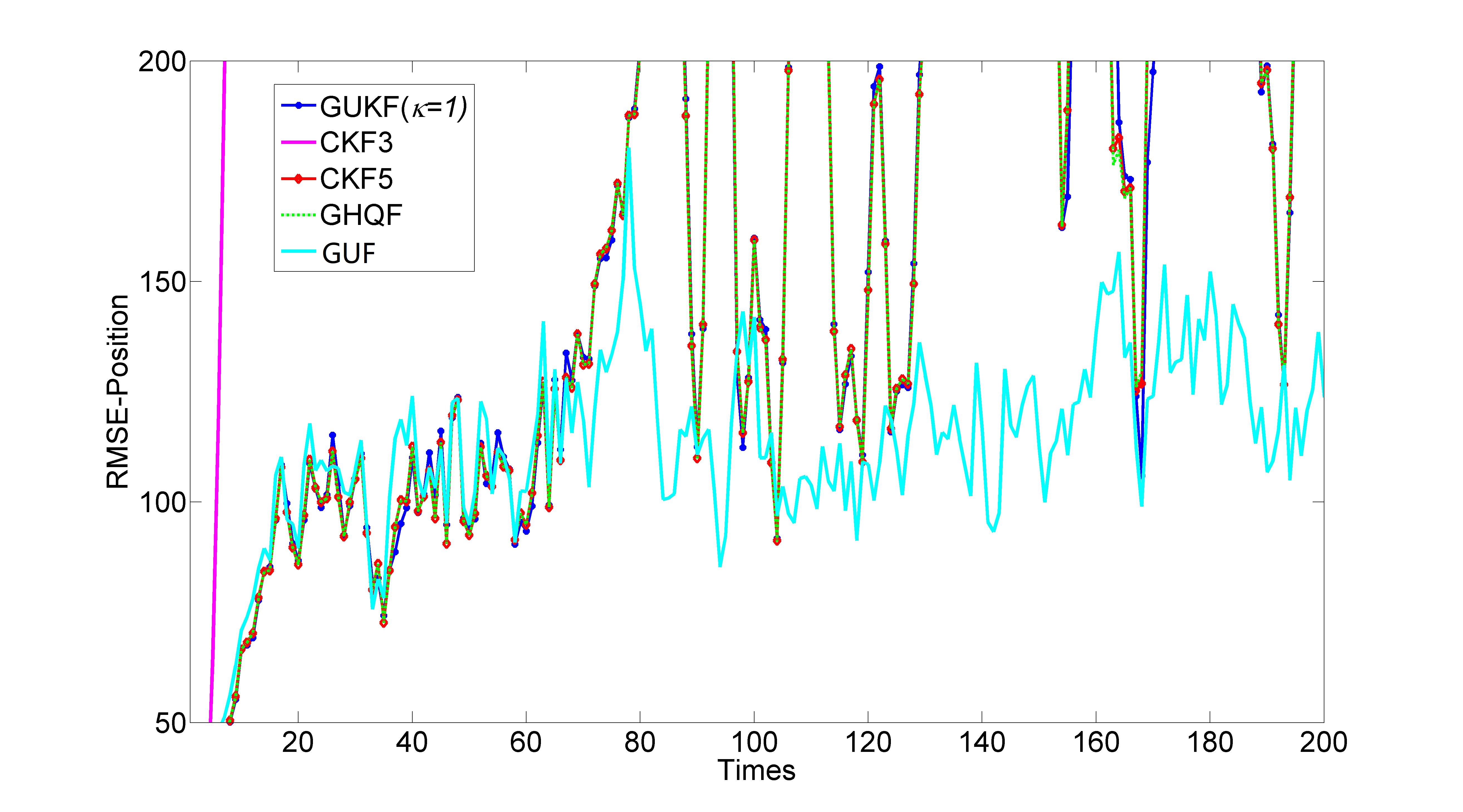}} \hfill
\subfloat[Velocity]{\includegraphics[width=0.5\textwidth]{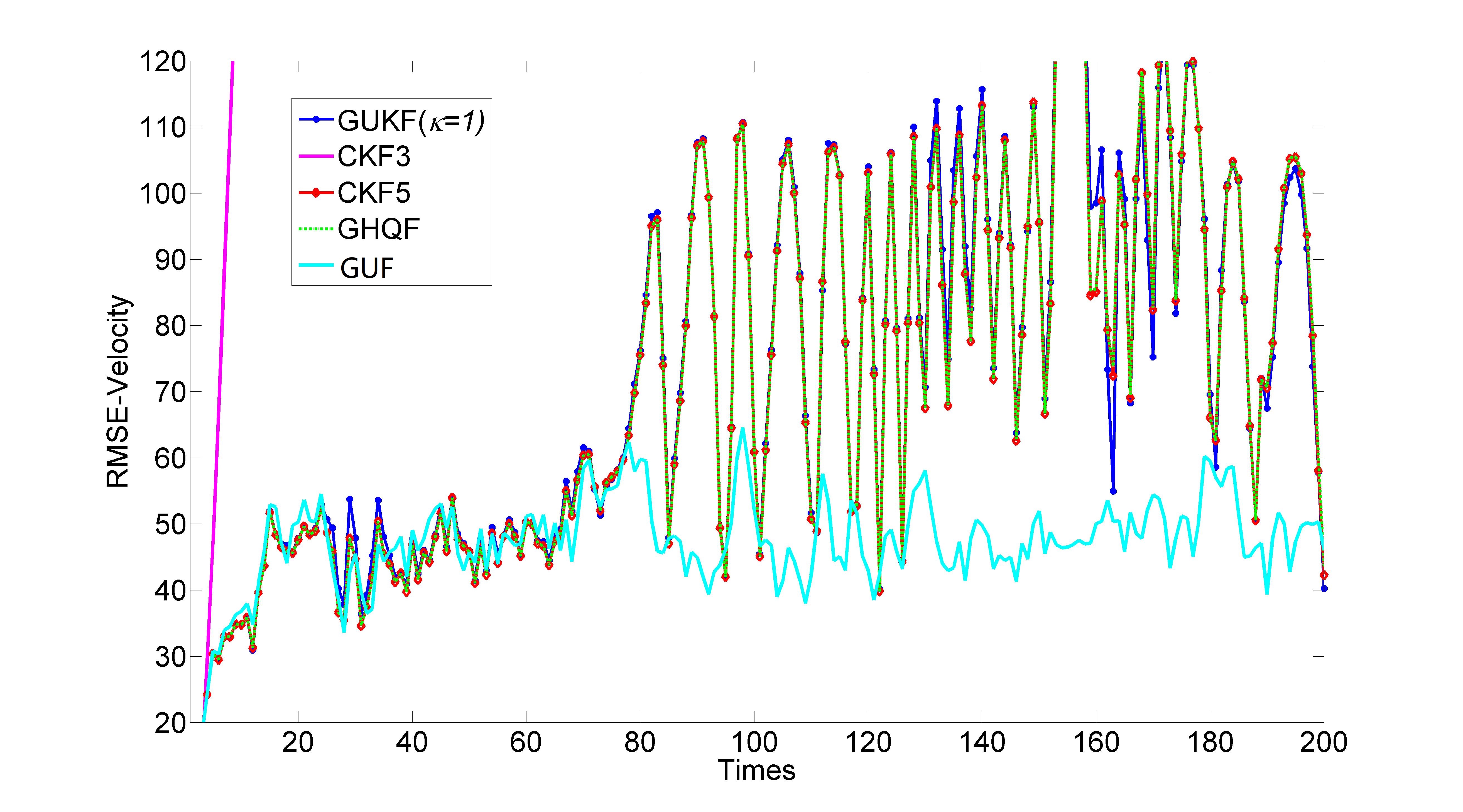}} \hfill
\subfloat[Turn Rate]{\includegraphics[width=0.5\textwidth]{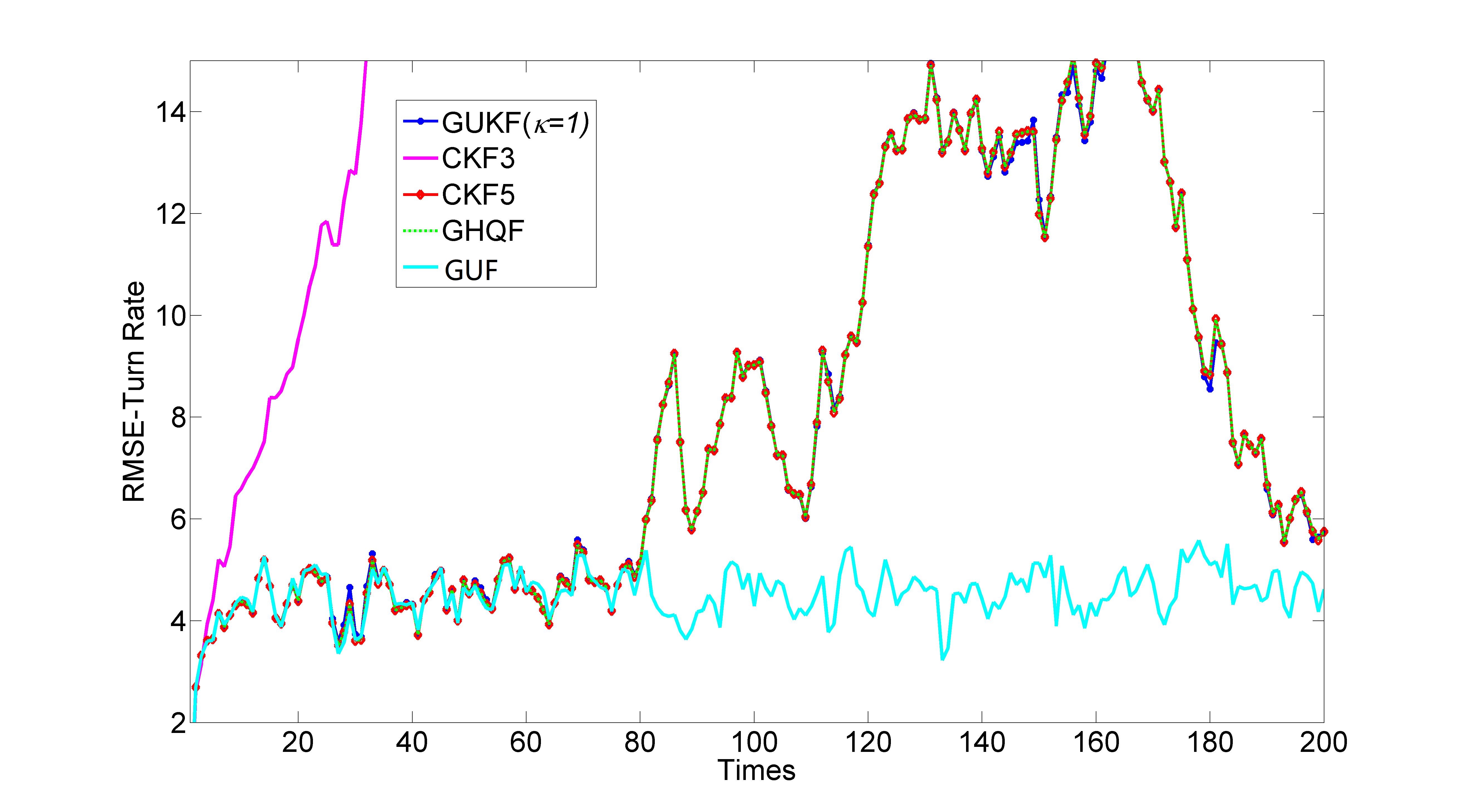}} \hfill
  \caption{The comparison of nonlinear Kalman filters: GUKF, CKF3, CKF5, GHQF, GUF}
  \label{fig:DNKF}
\end{figure}

\begin{figure}[!t]
\centering
\subfloat[Position]{\includegraphics[width=0.5\textwidth]{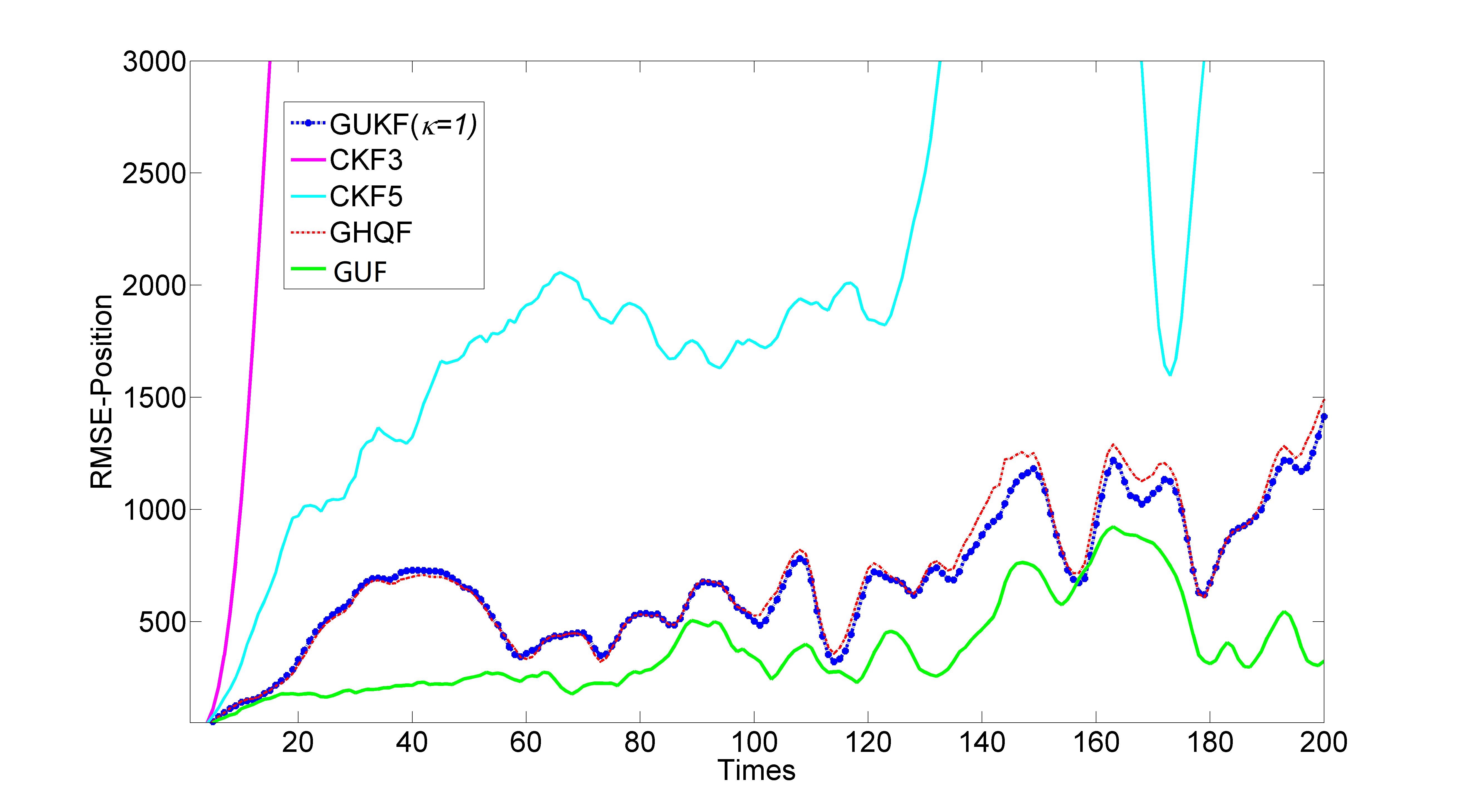}} \hfill
\subfloat[Velocity]{\includegraphics[width=0.5\textwidth]{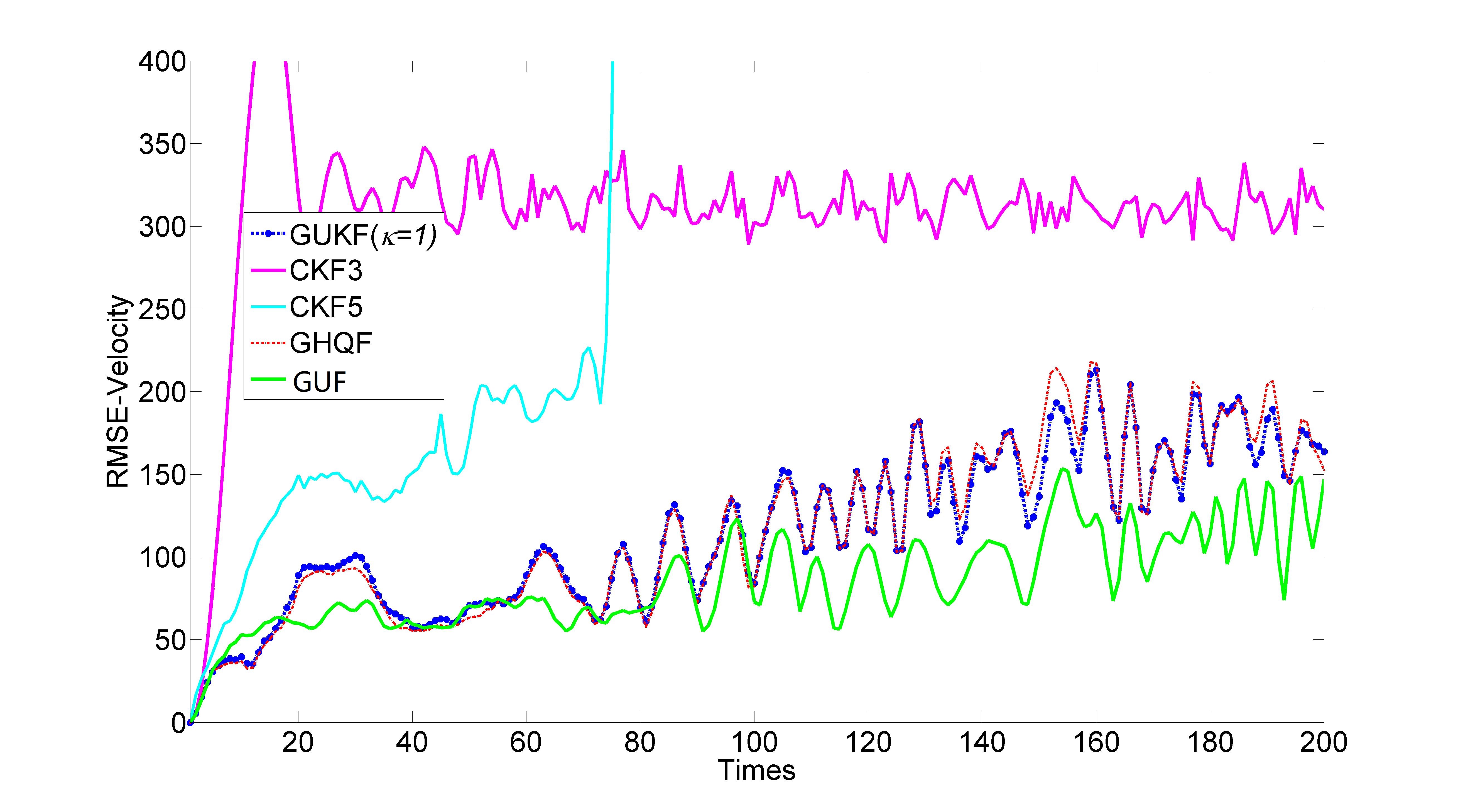}} \hfill
\subfloat[Turn Rate]{\includegraphics[width=0.5\textwidth]{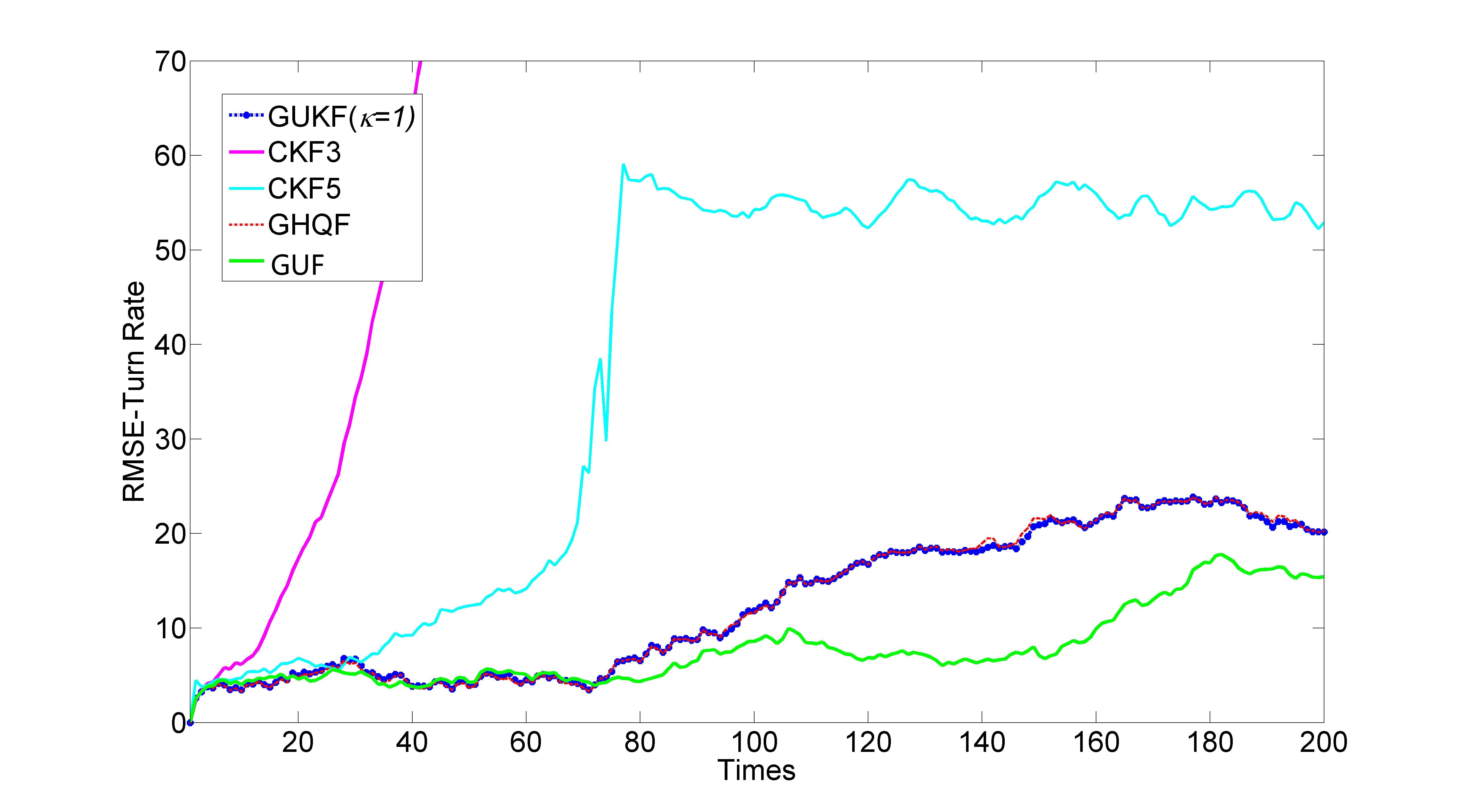}} \hfill
  \caption{The performances of nonlinear Kalman filter GUKF, CKF3, CKF5, GHQF and GUF, when the Gaussian nature of the problem is explicitly violated}
  \label{fig:NGNF}
\end{figure}

{\it Scenario 3}: To demonstrate the supremacy of GUF at some aspect over the existing filters, we ran following nonlinear filters: the GUKF, the three degrees CKF (CKF3) \cite{2009Cubaturekalmanfilters}, the three degrees CKF (CKF5) \cite{2013High-degreecubatureKalman},  the quadrature GHF (GHQF) \cite{2000Gaussianfiltersfornonlinear} and the GUF, with the same system parameters as in Scenario 1,  including (\ref{sim:data}) and the initial values. Note that, the systematic data here used is the same as the data for PF in \cite{2013High-degreecubatureKalman}, under which PF need much time and a large number of samples to achieve similar accuracy to CKF3 or CKF5. As it was already studied in such a way, we skip repeating PF in the comparison.

Here, we select GUKF instead of UKF, since it has better accuracy, as seen in Scenario 2. The GUKF using the referred value $\kappa=3-n$ by \cite{2004Unscentedfiltering} was found often halt its operation due to the indefinity of covariances in highly nonlinear and uncertain systems when $1\leq \kappa\leq 10$, the GUKF here can achieve a good performance.  As these performances are quite close to each other, without loss of generality, we pick $\kappa=1$ in the comparison study.  For a similar reason, in GUF, we pick $N=2$ and the reference samples specified by (\ref{Eq:RS1}). For GUKF, CKF3, CKF5, GHQF and GUF, the numbers of the sample are summarized in the second row of Table \ref{tab:NoS}. In this table, it reports the computation durations in the third row for each filter, including the RSME graph plots showed by Fig.\ref{fig:DNKF}. From the time consuming, we can see that GUF can maintain acceptable efficiency.

\begin{table}[!h]
\renewcommand{\arraystretch}{1.3}
\caption{The Numbers of Sample}
\label{tab:NoS}
\centering
\begin{tabular}{rccccc}
\hline
 \vline &  GUKF & CKF3 & CKF5 & GHQF & GUF \\
\hline
Samp.No.  \vline & 11 & 10 & 51 &  243 & 20 \\
\hline
Run-Time  \vline & 2.5395 & 2.5330 & 7.3599 &  35.0053 & 4.5268 \\
\end{tabular}
\end{table}

Fig. \ref{fig:DNKF} depicts the performances of these filters. As can be seen in \cite{2013High-degreecubatureKalman}, our simulation also showed that CKF5 and GHQF maintain an indiscernibly comparable performance to each other. Besides, GUKF with $\kappa=1$ also maintains an indiscernibly relative performance to them. However, their RSMEs fluctuate greatly in a wider range than GUF. This confirms that GUF can have better computational stability than existing methods at reasonable computational complexity.

{\it Scenario 4}: To test the robustness of the GUF, we extend the previous target tracking problem to  the case when the Gaussian nature of the problem is explicitly violated as follows. Let the measurement noise  $\mathbf w_k$ follow a Gaussian mixture
\begin{eqnarray}\label{Eq:MixGD}
\mathbf w_k & \sim & 0.5 \mathcal N(\mathbf 0, R_1)+0.5 \mathcal N(\mathbf 0, R_2) \quad  \mbox{with}\\
R_1 & = & \left(\begin{array}{cc}
1000\mbox{ m}^2   & 150\mbox{ m}^2\mbox{rad}  \\
150\mbox{ m}^2\mbox{rad} & 100 \mbox{ m rad}^2  \\
\end{array}\right)  \\
R_1 & =& \left(\begin{array}{cc}
50 \mbox{ m}^2   &  100 \mbox{ m}^2\mbox{rad}  \\
100 \mbox{ m}^2\mbox{rad} & 1000 \mbox{ m rad}^2  \\
\end{array}\right)
\end{eqnarray}
This setup is taken from \cite{2013High-degreecubatureKalman}, which is similar to a setup in \cite{2009Cubaturekalmanfilters}. Other systematic parameters are set as the same as before. In this scenario, we took $N=5$ and  the reference samples specified by (\ref{Eq:RS1}) for GUF. The RSMEs for different filters are shown in Fig.\ref{fig:NGNF}. Again, the GUF outperforms other filters roughly in the accuracy and computational stability. It has almost same number of samples as the corresponding number in CKF5, which is much less than the corresponding number in GHQF. So its computational complexity is also acceptable.

The simulation results exemplify our initial motivation of the GUF that avoid negative weights and improve accuracy by increasing samples with reasonable computational costs. Moreover, this indicates that for the sake of accuracy, efficiency and stability, the GUF is a good candidate for nonlinear Kalman filters, especially for the systems of higher dimensions, acute nonlinearity and high degrees of uncertainty.

\section{Conclusion}\label{Con}

In this article, we have proposed a new nonlinear Kalman filter called geometric unscented filter and illustrated this filtering under the Gaussian assumption. Note that, the GUF is a general framework for nonlinear systems. The  Gaussian assumption is used in the article only for the sake of easy understanding. Anyway, the GUF is inspired by PF, UKF and CKF in terms of sampling and filtering. As for sampling, it makes use of the idea of importance sampling in PF, the moments matching in UKF and the massive, symmetric sampling in CKF. Using moments matching captures the main characters, e.g. mean and covariance, of a probability distribution. With the massive, symmetric sampling and the IF derived from PF together, instead of the higher-order moments matching in UKF and CKF, it improves the accuracy at a reasonable computational cost. As to the filtering, it adopts the famous Kalman filtering, which is also the filtering framework of UKF and CKF, to obtain the optimal estimation to the least square errors. It could also be seen as a simplified PF with the special resampling strategy, namely the GUS to avoid the dimension curse in PF. Summarily, the GUF is a scalable and semi-deterministic sampling method as a selective mixture of PF, CKF and UKF, drawing advantage of them such as the positive probability weights of PF, the crucial probability information (mean and covariance) catching in UKF and the efficient sampling of CKF.

\appendix[Computing the coefficient $H_n(\mathcal B)$ in  \ref{sec:NWandMM}]\label{appendix:proofs}
In this appendix, we provide the computation of the coefficient $H_n(\mathcal B)$ in  \ref{sec:NWandMM}. To this end, we need to introduce some notations and notions as follows. Let $\Theta$ stand for the sign change operators and $\Phi$ denote the set of all permutation operators on the coordinates of an $n$-dimension vector. By the choice of UGD samples on $U_n$, if $\mathcal S\in U_n$ then $\phi\circ\theta (\mathcal S)\in U_n$ for any $\phi\in \Phi$ and $\theta\in \Theta$.

Given a UGD  $S$ of $U_n$, a {\it basis} of $S$ is a subset $ B \subseteq S$ satisfying (1) for any two vectors $\mathcal S_1\neq \mathcal S_2\in  B$, called {\it bases} of $S$, $\mathcal S_1\neq \phi\circ\theta(\mathcal S_2)$ for any $\phi\in \Phi$ and $\theta\in \Theta$, and (2) for any $\mathcal S\in S$, there are $\mathcal S'\in  B$,  $\phi\in \Phi$ and $\theta\in \Theta$ such that $\mathcal S'= \phi\circ\theta(\mathcal S)$.

The UGDs of  $U_n$ have a nice outer (tensor) product form in what follows. Given a base $\mathbf b$ from a basis $B$ of a UGD $S$ on  $U_n$, in the set $G(\mathbf b):=\{\phi\circ\theta(\mathbf b)\mid \phi\in \Phi, \theta\in \Theta\}$ there must be a member $\mathbf x ( x_1,x_2,\cdots,x_n)^T\in G(\mathbf x)$ such that
\begin{eqnarray}
\label{XC}
x_1=x_2=\cdots=x_{a_1} &\ge& x_{a_1+1}=x_{a_1+2}=\cdots=x_{a_1+a_2}\nonumber\\
&\ge&\cdots\\
&\ge&x_{\mu+1}=x_{\mu+2}=\cdots=x_{\mu+a_M}\nonumber\\
&\ge&0\nonumber
\end{eqnarray}
where $\mu=\sum_{i=1}^{M-1}a_i$, $\sum_{i=1}^{M}a_i=n$, and  $1\le a_i\le n$ for all $i$. Such $\mathbf x$ is called {\it generator}. Moreover, it is evident $G(\mathbf b)=G(\mathbf x)$.  This allows us to choose a basis consisting of generators. Such basis is called {\it standard basis}.

It is clear that the standard basis is chosen in the first quadrant of $n$-dimension Cartesian coordinate system.Then the quantity $N(\mathbf{x})$ of nonzero entries of $\mathbf{x}$ is:
\begin{eqnarray}
\label{NZC}
N(\mathbf{x})=\sum_{i=1}^{n}\mbox{sgn}(x_i)
\end{eqnarray}
where sgn$(\cdot)$  is the sign function.

Let $\Upsilon:=\{\phi\circ\theta\mid \phi\in \Phi ~ \& ~ \theta\in \Theta \}$ and define  $\Upsilon(\mathbf x):=\{\tau(\mathbf x)\mid \tau\in \Upsilon\}$. If all the inequalities in formula (\ref{XC}) are strict, then $\Upsilon(\mathbf x)$ has  $\displaystyle 2^{N(\mathbf{x})}\frac{n!}{\prod_{j=1}^{M}a_i!}$ many members.

For brevity, let $\Theta^n$ stand for the sign changing operators on the coordinates of an $n$-dimension vector. For a $\theta\in\Theta^n$,  $\theta$ is defined by $\theta=(\theta_1,\theta_2,\cdots,\theta_n)$ with   $\theta_i=\pm 1$ such that $\theta(\mathbf{x})=(\theta_1x_1,\theta_2x_2,\cdots,\theta_n x_n)^T$.  And let $\Theta^n(\mathbf x):=\{\theta(\mathbf x)\mid \theta\in \Theta^n\}$ be the images of $\mathbf x$ under the operators of $\Theta^n$. Similarly,  let $\Phi^n$ denote the set of all permutation operators on $n$ objects and $\Phi^n(\mathbf x):=\{\phi(\mathbf x) \mid \phi \in \Phi^n\}$, correspondingly, define $\Upsilon^n:=\{\phi\circ\theta\mid \phi\in \Phi ^n~ \& ~ \theta\in \Theta^n \}$ and $\Upsilon^n(\mathbf x)$.  In the following, we  present the sums of outer products for the sets $\Theta^n(\mathbf x), \Phi^n(\mathbf x)$, and $\Upsilon^n(\mathbf x)$ with different sorts of $\mathbf x$.

\begin{lem}
	\label{Lma1}
	Take $\mathbf{x}=(x_1,x_2,\cdots,x_n)^T \in\mathbb{R}^{n}$.  If for all $1\leq i\leq n,  x_i \neq 0$, then
	\begin{eqnarray}
	\label{CL1}
	\sum_{\mathbf y \in \Theta^n(\mathbf x)}\mathbf y\mathbf y^T=2^n\left[
	\begin{array}{cccc}
	x_1^2&&&\\
	&x_2^2&&\\
	&&\ddots&\\
	&&&x_n^2\\
	\end{array}\right]
	\end{eqnarray}
\end{lem}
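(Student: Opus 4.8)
The plan is to verify the matrix identity (\ref{CL1}) entrywise. First I would note that, because $x_i \neq 0$ for every $i$, the $2^n$ vectors $\theta(\mathbf{x})$ with $\theta \in \Theta^n$ are pairwise distinct, so that $\Theta^n(\mathbf{x})$ has exactly $2^n$ elements and $\sum_{\mathbf{y} \in \Theta^n(\mathbf{x})} \mathbf{y}\mathbf{y}^T = \sum_{\theta \in \Theta^n} \theta(\mathbf{x})\,\theta(\mathbf{x})^T$. This is the only place where the hypothesis $x_i\neq 0$ is used; without it some of the images would coincide and would have to be dropped rather than summed with multiplicity.

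Next I would fix indices $1 \leq p, q \leq n$ and compute the $(p,q)$ entry of the right-hand sum. Writing $\theta = (\theta_1,\cdots,\theta_n)$ with each $\theta_i = \pm 1$, the $p$-th coordinate of $\theta(\mathbf{x})$ is $\theta_p x_p$, so the $(p,q)$ entry of $\theta(\mathbf{x})\theta(\mathbf{x})^T$ is $\theta_p\theta_q x_p x_q$. Hence the $(p,q)$ entry of $\sum_{\theta\in\Theta^n}\theta(\mathbf{x})\theta(\mathbf{x})^T$ equals $x_p x_q\sum_{\theta\in\Theta^n}\theta_p\theta_q$, and since $\theta_1,\cdots,\theta_n$ each range independently over the two values $\pm 1$, this last sum factors over the coordinates.

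Finally I would split into two cases. If $p = q$ then $\theta_p\theta_q = \theta_p^2 = 1$ for all $\theta$, so $\sum_{\theta\in\Theta^n}\theta_p\theta_q = 2^n$ and the diagonal entry equals $2^n x_p^2$. If $p\neq q$ then the sum factors as $\bigl(\sum_{\theta_p=\pm1}\theta_p\bigr)\bigl(\sum_{\theta_q=\pm1}\theta_q\bigr)\,2^{\,n-2}$, which is $0$ because $\sum_{\theta_p=\pm1}\theta_p = 0$; hence every off-diagonal entry vanishes. Therefore the sum equals the diagonal matrix $2^n\,\mbox{diag}(x_1^2,\cdots,x_n^2)$ of (\ref{CL1}). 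There is no real obstacle here: the argument is a short direct computation, the only subtlety being the distinctness observation that legitimizes replacing the sum over $\Theta^n(\mathbf{x})$ by the sum over $\Theta^n$.
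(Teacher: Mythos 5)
Your proof is correct, but it takes a different route from the paper's. The paper argues by induction on the dimension: for $n=k+1$ it writes $\mathbf{x}=[\sigma^T,x_{k+1}]^T$, pairs each $\theta(\sigma)$ with the two choices of sign on the last coordinate, observes that the off-diagonal blocks $\pm\mathbf{y}x_{k+1}$ cancel in each pair, and reduces to the $k$-dimensional case. You instead compute the $(p,q)$ entry directly: it equals $x_p x_q\sum_{\theta\in\Theta^n}\theta_p\theta_q$, which factors over the independent coordinates and gives $2^n$ on the diagonal and $0$ off it because $\sum_{\theta_p=\pm1}\theta_p=0$. The two arguments are essentially the same cancellation seen from different angles (the paper cancels one coordinate at a time, you cancel all off-diagonal entries at once), but your version is shorter and makes the mechanism more transparent. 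You also make explicit something the paper leaves implicit, namely that the hypothesis $x_i\neq 0$ is exactly what guarantees $|\Theta^n(\mathbf{x})|=2^n$ so that the sum over the \emph{set} $\Theta^n(\mathbf{x})$ coincides with the sum over all $2^n$ sign operators; this is the point that degrades to the factor $2^{N(\mathbf{x})}$ in the paper's Lemma~\ref{Lma2} when zero entries are allowed, so flagging it is a genuine gain in clarity.
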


\begin{proof} We show this result by induction.  For $n=1$, $\mathbf{x}=(x_1)^T$, then the left side of the equation (\ref{CL1}) is
	\begin{eqnarray}
	\label{PR1}
	\sum_{\mathbf y \in \Theta^1(\mathbf x)}\mathbf y\mathbf y^T=x_1*x_1+(-x_1)*(-x_1)=  2^1x_1^2\nonumber
	\end{eqnarray}
	and so the (\ref{CL1}) is true.
	
	Assume that for $n=k$, for any $\mathbf{x}=(x_1,x_2,\cdots,x_k)^T$ without zero entries, the equation (\ref{CL1}) is true, that is
	\begin{eqnarray}
	\label{PRK}
	\sum_{\mathbf y \in \Theta^k(\mathbf x)}\mathbf y\mathbf y^T=2^k\left[
	\begin{array}{cccc}
	x_1^2&&&\\
	&x_2^2&&\\
	&&\ddots&\\
	&&&x_k^2\\
	\end{array}\right]
	\end{eqnarray}
	
	For $n=k+1$,  take a $\mathbf{x}=[x_1,x_2,\cdots,x_k,x_{k+1}]^T$  without zero entries. For a succinct depiction,  let $\mathbf{x}=[\sigma^T,x_{k+1}]^T$ and $\sigma=[x_1,x_2,\cdots,x_k]^T$. Then $ \displaystyle\sum_{\mathbf y \in \Theta^{k+1}(\mathbf x)}\mathbf y\mathbf y^T$ could be computed as following
	\begin{eqnarray}
	\label{PRn1}
	\sum_{\mathbf y \in \Theta^{k+1}(\mathbf x)}\mathbf y\mathbf y^T
	&=&\sum_{\mathbf y \in \Theta^{k}(\sigma)} \left\{[\mathbf y^T,x_{k+1}]^T[\mathbf y^T,x_{k+1}]\right.\nonumber\\
	&&+\left.[\mathbf y^T,-x_{k+1}]^T[\mathbf y^T,-x_{k+1}]\right\}\nonumber\\
	&=&\sum_{\mathbf y \in \Theta^{k}(\sigma)}\left\{
	\left[\begin{array}{cc}
	\mathbf y\mathbf y^T&\mathbf yx_{k+1}\\
	x_{k+1}\mathbf y^T&x_{k+1}^2\\
	\end{array}\right]\right.\nonumber\\
	&&+\left.\left[
	\begin{array}{cc}
	\mathbf y\mathbf y^T&-\mathbf y x_{k+1}\\
	-x_{k+1}\mathbf y^T&x_{k+1}^2\\
	\end{array}\right]\right\}\nonumber\\
	&=&\sum_{\mathbf y \in \Theta^{k}(\sigma)}2\left[\begin{array}{cc}
	\mathbf y\mathbf y^T&\\
	&x_{k+1}^2\\
	\end{array}\right]\nonumber\\
	&=&2\left[\begin{array}{cc}
	\sum\limits_{\mathbf y \in \Theta^{k}(\sigma)}\mathbf y\mathbf y^T&\\
	&2^k x_{k+1}^2\\
	\end{array}\right]
	\end{eqnarray}
	Using the assumption (\ref{PRK}), the equality (\ref{PRn1}) is transformed into
	\begin{eqnarray}
	\label{PRN2}
	\sum_{\mathbf y \in \Theta^{k+1}(\mathbf x)}\mathbf y\mathbf y^T=2^{k+1}\left[
	\begin{array}{cccc}
	x_1^2&&&\\
	&x_2^2&&\\
	&&\ddots&\\
	&&&x_{k+1}^2\\
	\end{array}\right]\nonumber
	\end{eqnarray}
	This completes the proof of the Lemma \ref{Lma1}.
\end{proof}

If the vector $\mathbf x$ contains some zero entries, then the sign changes on zero entries make no sense. In such a case, the size of set $\Theta^n(\mathbf x)$ is reduced. Correspondingly,  the coefficient of the diagonal matrix of formula (\ref{CL1})  is reduced. Generally, we have the following result.

\begin{lem}
	\label{Lma2}
	Let $\mathbf{x}=(x_1,x_2,\cdots,x_n)^T\in\mathbb{R}^{n}$, then
	\begin{eqnarray}
	\sum_{\mathbf y \in \Theta^n(\mathbf x)}\mathbf y\mathbf y^T=2^{N(\mathbf{x})}\left[
	\begin{array}{cccc}
	x_1^2&&&\\
	&x_2^2&&\\
	&&\ddots&\\
	&&&x_n^2\\
	\end{array}\right]
	\end{eqnarray}
\end{lem}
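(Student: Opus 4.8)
The plan is to reduce Lemma \ref{Lma2} to Lemma \ref{Lma1} by separating the zero coordinates from the nonzero ones. First I would split the index set $\{1,\dots,n\}$ into $Z=\{i \mid x_i=0\}$ and $P=\{i\mid x_i\neq 0\}$, noting $|P|=N(\mathbf x)$ by (\ref{NZC}). The key observation is that a sign-change operator $\theta\in\Theta^n$ acts trivially on the coordinates in $Z$ (flipping the sign of $0$ does nothing), so the set $\Theta^n(\mathbf x)$ consists only of the vectors obtained by independently flipping signs on the $|P|$ nonzero coordinates; in particular $|\Theta^n(\mathbf x)|=2^{N(\mathbf x)}$ rather than $2^n$.

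Next I would set up the bookkeeping for the outer-product sum. Write $\mathbf y\mathbf y^T$ entrywise: its $(i,j)$ entry is $y_iy_j$. If either $i\in Z$ or $j\in Z$ then that entry is $0$ for every $\mathbf y\in\Theta^n(\mathbf x)$, so the whole sum is supported on the $P\times P$ block. On that block, summing $y_iy_j$ over $\mathbf y\in\Theta^n(\mathbf x)$ is exactly the computation already done in Lemma \ref{Lma1} applied to the reduced vector $\mathbf x_P:=(x_i)_{i\in P}\in\mathbb R^{N(\mathbf x)}$ (which has no zero entries), giving $2^{N(\mathbf x)}$ on the diagonal ($y_i^2=x_i^2$) and $0$ off-diagonal (the terms $y_iy_j$ cancel in $\pm$ pairs for $i\neq j$). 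Reassembling, the full $n\times n$ matrix is $2^{N(\mathbf x)}\,\mathrm{diag}(x_1^2,\dots,x_n^2)$, where the diagonal entries indexed by $Z$ are automatically $0=x_i^2$, so the stated formula holds verbatim.

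Alternatively, and perhaps more cleanly for the write-up, I would not invoke Lemma \ref{Lma1} as a black box but instead re-run the same induction directly, inducting on $n$ and splitting into the two cases $x_{k+1}=0$ and $x_{k+1}\neq 0$ in the inductive step: when $x_{k+1}\neq 0$ the computation is identical to (\ref{PRn1}) and produces a factor $2$; when $x_{k+1}=0$ the operator set does not grow, so $\sum_{\mathbf y\in\Theta^{k+1}(\mathbf x)}\mathbf y\mathbf y^T = \mathrm{blockdiag}\bigl(\sum_{\mathbf y\in\Theta^k(\sigma)}\mathbf y\mathbf y^T,\ 0\bigr)$ with no new factor, matching $2^{N(\mathbf x)}$ since $N(\mathbf x)=N(\sigma)$ in that case.

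I do not expect a genuine obstacle here — the only point requiring a little care is the combinatorial claim that sign changes on zero coordinates are vacuous, hence that $|\Theta^n(\mathbf x)|=2^{N(\mathbf x)}$ and that the duplication does not over-count; this is really just the observation that $\Theta^n(\mathbf x)=\Theta^n(\mathbf x)$ depends only on the restriction to nonzero coordinates. Everything else is the same cancellation-in-pairs argument underlying Lemma \ref{Lma1}. I would therefore present the proof as a short reduction to Lemma \ref{Lma1} plus this remark, rather than repeating the full induction.
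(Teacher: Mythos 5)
Your proposal is correct and matches the paper's reasoning: the paper offers no formal proof of Lemma~\ref{Lma2}, only the preceding remark that sign changes on zero entries are vacuous so that $|\Theta^n(\mathbf x)|$ and hence the diagonal coefficient drop from $2^n$ to $2^{N(\mathbf x)}$, which is exactly the observation your reduction to Lemma~\ref{Lma1} makes rigorous. Your careful handling of the non-over-counting point (that $\Theta^n(\mathbf x)$ is a set of distinct images, not a multiset over all $2^n$ operators) is a worthwhile addition, but it is the same argument, just written out in full.
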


Additionally,  we put the permutation operators into consideration of the sum of  outer products. That is, consider the sum $\sum_{\mathbf y \in \Upsilon^n(\mathbf x)}\mathbf y\mathbf y^T$ for $\mathbf{x}=(x_1,x_2,\cdots,x_n)^T\in\mathbb{R}^{n}$.  Firstly,  $\Upsilon^n(\mathbf x)$ should have some SB. Furthermore, such SB can be consisted of one base. Without loss of generality, assume that $\mathbf x$ is a generator of  $\Upsilon^n(\mathbf x)$.
\begin{thm}
	If $\mathbf{x}=(x_1,x_2,\cdots,x_n)^T\in\mathbb{R}^{n}$ is a  generator, then it has
	\begin{eqnarray}
	\label{TH2}
	\sum_{\mathbf y \in\Upsilon^n(\mathbf x)}\mathbf y\mathbf y^T  =H_n(\mathbf{x})\mathbf E_n \label{eq:outerpro} \\
	H_n(\mathbf{x})=2^{N(\mathbf{x})}\frac{(n-1)!}{\prod_{j=1}^{M}a_j!}\sum_{i=1}^{M}x_{t_i}^2a_i  \label{eq:couterpro}
	\end{eqnarray}
	where $t_i=\sum_{j=1}^{i}a_j$, $\mathbf E_n$ is the $n$-dimensional identity matrix.
\end{thm}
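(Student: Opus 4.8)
The plan is to exploit the symmetry of the point set $\Upsilon^n(\mathbf x)$, which by the choice of UGD samples is closed under every sign change $\theta\in\Theta^n$ and every permutation $\phi\in\Phi^n$. Write $T:=\sum_{\mathbf y\in\Upsilon^n(\mathbf x)}\mathbf y\mathbf y^T$. First I would show $T$ is diagonal. Fix $i\ne j$ and pair each $\mathbf y\in\Upsilon^n(\mathbf x)$ with the vector $\mathbf y'$ obtained by flipping the sign of its $i$-th coordinate, which again lies in $\Upsilon^n(\mathbf x)$; when $y_i\ne 0$ this pairing is fixed-point free and the two contributions $y_iy_j$ and $-y_iy_j$ to $T_{ij}$ cancel, while if $y_i=0$ the term $y_iy_j$ is already zero. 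Hence $T_{ij}=0$ whenever $i\ne j$. This is precisely the off-diagonal mechanism behind Lemmas~\ref{Lma1} and \ref{Lma2}, now applied across the full orbit; equivalently one may write $\Upsilon^n(\mathbf x)=\bigcup_{\mathbf z\in\Phi^n(\mathbf x)}\Theta^n(\mathbf z)$, a disjoint union since each nonnegative $\mathbf z$ is its own magnitude pattern, and invoke Lemma~\ref{Lma2} on every inner sum.

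Next I would show that all diagonal entries of $T$ coincide. Since $\Upsilon^n(\mathbf x)$ is invariant under coordinate permutations, for any $i,j$ there is a permutation that preserves the orbit and carries coordinate $i$ to coordinate $j$, so $\sum_{\mathbf y}y_i^2=\sum_{\mathbf y}y_j^2$. Call this common value $H_n(\mathbf x)$; then $T=H_n(\mathbf x)\mathbf E_n$, which is (\ref{eq:outerpro}). To evaluate $H_n(\mathbf x)$ I would take traces: $n\,H_n(\mathbf x)=\sum_{i=1}^n T_{ii}=\sum_{\mathbf y\in\Upsilon^n(\mathbf x)}\|\mathbf y\|^2=|\Upsilon^n(\mathbf x)|\cdot\|\mathbf x\|^2$, since sign changes and permutations preserve the Euclidean norm. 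As $\mathbf x$ is a generator of the form (\ref{XC}) with (maximal) blocks of sizes $a_1,\dots,a_M$, the cardinality formula recorded earlier in this appendix gives $|\Upsilon^n(\mathbf x)|=2^{N(\mathbf x)}\,n!/\prod_{j=1}^M a_j!$ with $N(\mathbf x)$ as in (\ref{NZC}), sign flips acting trivially on any zero coordinates; and grouping equal coordinates gives $\|\mathbf x\|^2=\sum_{i=1}^M a_i x_{t_i}^2$ with $t_i=\sum_{j\le i}a_j$. Dividing by $n$ and using $n!/n=(n-1)!$ yields $H_n(\mathbf x)=2^{N(\mathbf x)}\frac{(n-1)!}{\prod_{j=1}^M a_j!}\sum_{i=1}^M a_i x_{t_i}^2$, i.e.\ (\ref{eq:couterpro}).

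The symmetry arguments for diagonality and for the equality of the diagonal entries are routine; I expect the only real obstacle to be the bookkeeping in the last step, namely correctly identifying $|\Upsilon^n(\mathbf x)|$ as the product of the multinomial coefficient $n!/\prod_j a_j!$ and $2^{N(\mathbf x)}$ (taking care of coordinates equal to zero, on which the sign operators act trivially) and regrouping $\|\mathbf x\|^2$ block by block. Once those two combinatorial facts are in hand, the formula drops out of the trace identity.
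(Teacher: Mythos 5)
Your argument is correct, and it reaches (\ref{eq:couterpro}) by a genuinely different final step than the paper. The paper also reduces to the disjoint decomposition $\Upsilon^n(\mathbf x)=\bigcup_{\mathbf z\in\Phi^n(\mathbf x)}\Theta^n(\mathbf z)$ and applies Lemma~\ref{Lma2} to each inner sum, so your diagonality argument (whether by the sign-flip involution or by invoking Lemma~\ref{Lma2}) matches the paper's. Where you diverge is in evaluating the common diagonal entry: the paper counts, for each block value $x_{t_i}$, how many permuted vectors $\mathbf z\in\Phi^n(\mathbf x)$ place that value in a fixed slot, namely $\frac{(n-1)!\,a_i}{\prod_{j=1}^{M}a_j!}$, and sums these weighted by $x_{t_i}^2$; you instead take the trace, using $\operatorname{tr}(\mathbf y\mathbf y^T)=\|\mathbf y\|^2=\|\mathbf x\|^2$ for every orbit element, so that $nH_n(\mathbf x)=|\Upsilon^n(\mathbf x)|\cdot\|\mathbf x\|^2$ and only the total orbit cardinality $2^{N(\mathbf x)}n!/\prod_j a_j!$ is needed. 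The two counts are equivalent combinatorial facts (summing the paper's per-slot counts over $i$ recovers the orbit size), so neither route is materially harder, but the trace identity packages the bookkeeping more cleanly and makes the role of norm invariance explicit; the paper's route has the minor advantage of exhibiting each diagonal entry directly rather than inferring it from the already-established equality of the diagonal. Both arguments share the same implicit hypothesis, namely that the blocks in (\ref{XC}) are maximal so that distinct blocks carry distinct values and the multinomial counts are exact.
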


\begin{proof}

	For a permutation operator $\phi \in \Phi_n$ on vector $\mathbf{x}$,  let $\phi(\mathbf{x})=(\phi(\mathbf{x})_1,\phi(\mathbf{x})_2,\cdots,\phi(\mathbf{x})_n)^T$ be the image, where $\phi(\mathbf{x})_k$ stands for the $k$-th element of vector $\phi(\mathbf{x})$. Then
	\begin{eqnarray}
	\label{General}
	\sum_{\mathbf y \in \Upsilon^n(\mathbf x)}\mathbf y\mathbf y^T =\sum_{\mathbf z \in \Phi^n(\mathbf x)} \sum_{\mathbf y \in \Theta^n(\mathbf z)} \mathbf y\mathbf y^T= \nonumber\\
	\sum_{\mathbf z \in \Phi^n(\mathbf x) }2^{N(\mathbf{z})}\left[
	\begin{array}{cccc}
	\mathbf{z}_1^2&&&\\
	&\mathbf{z}_2^2&&\\
	&&\ddots&\\
	&&&\mathbf{z}_n^2\\
	\end{array}\right]
	\end{eqnarray}
	Note that,   $N(\mathbf{x})=N(\tau(\mathbf{x}))$ for any $\tau \in \Upsilon^n$. Meanwhile, for  $t_i=\sum_{j=1}^{i}a_j$, $i=1,2,\cdots, M$, the set $\{\mathbf z\mid \mathbf z_1=x_{t_i}~\&~ \mathbf z\in \Phi^n(\mathbf x) \}$ has$\displaystyle \frac{(n-1)!a_i}{\prod_{j=1}^{M}a_j!}$ many members. Thus the (\ref{General}) can be computed as following
	\begin{eqnarray}
	\label{General2}
	&&\sum_{\mathbf y \in \Upsilon^n(\mathbf x)}\mathbf y\mathbf y^T \quad\quad\quad\quad\quad\quad\quad\quad\quad\quad\quad\nonumber\\
	&=&2^{N(\mathbf{x})}\sum_{\mathbf z \in \Theta^n(\mathbf x)}\left[
	\begin{array}{cccc}
	\mathbf z_1^2&&&\\
	&\mathbf z_2^2&&\\
	&&\ddots&\\
	&&&\mathbf z_n^2\\
	\end{array}\right]\\
	&=&2^{N(\mathbf{x})}\sum_{i=1}^{M}\frac{(n-1)!a_i}{\prod_{j=1}^{M}a_j!}x_{t_i}^2 \mathbf E_n\nonumber\\
	&=&2^{N(\mathbf{x})}\frac{(n-1)!}{\prod_{j=1}^{M}a_j!}\sum_{i=1}^{M}x_{t_i}^2a_i \mathbf E_n\nonumber\\
	&=&H_n(\mathbf{x}) \mathbf E_n
	\end{eqnarray}
\end{proof}

\section*{Acknowledgment}
The authors would like to thank the associate editor Prof.Saab and the anonymous reviewers for many constructive comments that helped us to clarify the presentation. We are grateful to Mr Yang Wenqiang for his help of the MATLAB programming. NSF of China partially supported this work (No. 11401061, No. 61202131, and No. 61672488), SRF for ROCS, the CAS western light program, CAS Youth Innovation Promotion Association (No. 2015315),  National Key
R$\&$D Program of China (No. 2018YFC0116704), Chongqing Science and Technology Commission projects cstc2014jcsfglyjs0005 and cstc2014zktjccxyyB0031.

\ifCLASSOPTIONcaptionsoff
  \newpage
\fi

\bibliographystyle{IEEEtran}

% Generated by IEEEtran.bst, version: 1.13 (2008/09/30)

\end{document}